    \newcolumntype{P}[1]{>{\centering\hspace{0pt}\arraybackslash}p{#1}}
    \newcolumntype{M}[1]{>{\centering\hspace{0pt}\arraybackslash}m{#1}}
    \newcolumntype{L}{>{\centering\arraybackslash}m{3cm}}
\def\BibTeX{{\rm B\kern-.05em{\sc i\kern-.025em b}\kern-.08em
    T\kern-.1667em\lower.7ex\hbox{E}\kern-.125emX}}
\setlist{nolistsep,leftmargin=.6cm}
 \newcommand{\ind}{\perp\!\!\!\!\perp}
\begin{document}
\newtheorem{proposition}{Proposition}
\newtheorem{definition}{Definition}
\newtheorem{lemma}{Lemma}
\newtheorem*{theorem*}{Theorem}
\newtheorem{theorem}{Theorem}
\newtheorem{corollary}{Corollary}
\newtheorem{assumption}{Assumption}
\newtheorem{claim}{Claim}
%
\title{Non-Coherent Active Device Identification for Massive Random Access }
%
%
%

\author{Jyotish~Robin,~\IEEEmembership{Graduate~Student~Member,~IEEE,}
        Elza~Erkip,~\IEEEmembership{Fellow,~IEEE}
\thanks{Jyotish Robin and Elza Erkip are with the Department of Electrical and
Computer Engineering, NYU Tandon School of Engineering, 6 MetroTech Center, Brooklyn, NY 11201 USA. (e-mail: jyotish.robin@nyu.edu). }
}

%
%

\markboth{IEEE Transactions on Communications}%
{Submitted paper}
%



\maketitle

\vspace{-1cm}
\begin{abstract}
Massive Machine-Type Communications (mMTC) is a key service category in the current generation of wireless networks featuring an extremely high density of energy and resource-limited devices with sparse and sporadic activity patterns.   In order to enable random access in such mMTC networks, base station needs to identify the active devices while operating  within stringent access delay constraints.  In this paper, an energy efficient active device identification protocol is proposed in which active devices transmit  On-Off Keying (OOK) modulated preambles jointly and  base station employs non-coherent energy detection avoiding channel estimation overheads. The minimum number of channel-uses required by the active user identification protocol is characterized in the asymptotic regime of total number of devices $\ell$ when the number of active devices $k$ scales as $k=\Theta(1)$  along with an achievability scheme relying on the equivalence of activity detection to a group testing problem.  Several practical schemes based on  Belief Propagation (BP) and  Combinatorial Orthogonal Matching Pursuit (COMP) are also proposed. Simulation results show that BP strategies outperform COMP significantly and can operate close to the theoretical achievability bounds. In a partial-recovery setting where few misdetections are allowed, BP continues to perform well.

\end{abstract}

\begin{IEEEkeywords}
Massive machine-type communication (mMTC), massive random access, active device identification, IoT, group testing.
\end{IEEEkeywords}

%
\IEEEpeerreviewmaketitle

\section{Introduction}
%
%
%
%
\IEEEPARstart{S}{ixth} generation (6G) wireless communication networks are expected to serve a myriad of smart sensing applications involving massive machine-type communications (mMTC) as in Internet of Things (IoT). In contrast to the traditional cellular systems, mMTC has to tackle  a novel set of technical challenges posed by the diversity in  data traffic patterns and application-specific constraints in machine-centric communications \cite{7736615}. Typically,  mMTC networks are  comprised of sensors, tracking devices, actuators, etc., which engage in the monitoring and recording of critical information. These  mMTC devices often need to support dynamic service  requirements such as variable payload sizes, data security and delay constraints while reliably operating on low duty cycles and stringent  energy budget constraints to enhance efficiency \cite{7263368}.   Machine-type applications naturally lead to a sparse and sporadic data traffic since only a small subset of devices are active at any given time instant \cite{9023459}. Moreover, since the payload size is small and data transmissions are infrequent, On-Off Keying (OOK) based modulation schemes are known to be energy efficient as the devices can save energy during off-slots \cite{5937958,8534548,9596585}. Furthermore, in the context of green communication, there has been a renewed interest in OOK modulation schemes for serving low-rate and high-reliability  applications \cite{6120373}.

Another key distinction of mMTC systems w.r.t. cellular systems  is that they are more prone to security threats since they are composed of simple nodes with limited computational capabilities \cite{8386824}. Hence, it is critical to maintain data security and provide robustness against malicious attacks such as jamming, spoofing and denial-of-service \cite{mendez2018internet}. To combat malicious jamming attacks, spread-spectrum techniques, viz., direct
sequence spread spectrum  and frequency hopping
spread spectrum (FHSS), have been considered for commercial
wireless networks \cite{7036828}. Specifically, FHSS approaches are more suitable in IoT scenarios due to their scalability and superior resilience to narrow band interference  \cite{fi11010016}. For example,  many mMTC architectures rely on fast frequency hopping where each symbol experiences independent fading to enhance communication security against jamming attacks \cite{4394775, 9136922}.  

In mMTC networks, the BS faces the challenging task of successfully decoding data packets  transmitted by a random unknown active subset of sensors. In fact,  efficiently identifying the active devices is a critical part of many Random Access (RA) architectures \cite{8403656,TS36.213,9266124} as misidentifications can lead to  data losses which  significantly impact critical IoT environments \cite{ALTURJMAN2017299}. In  this paper, our focus is on this key problem of active device identification in  mMTC networks  which are operating under stringent requirements in terms of latency, data rates, reliability, energy consumption, and
security/privacy. In dense mMTC networks, traditional grant-based random access strategies result in excessive preamble collisions drastically increasing the signaling overheads and transmission delays \cite{9060999}. On the other hand, grant-free random access suffers from severe co-channel interference caused by the non-orthogonality of preambles \cite{9537931}. Thus, active device identification is a more intricate problem in massive mMTC systems than traditional cellular systems and requires novel strategies.

 Several recent works \cite{8734871,7952810,8454392,5695122,7282735}  exploit the sporadic device activity pattern to employ compressed sensing techniques such as Approximate Message Passing (AMP) \cite{8734871,7952810,8454392,5695122} and sparse graph codes \cite{7282735} to detect active users. In AMP based approaches, preamble sequences are typically generated from i.i.d. Gaussian distributions and the activity  detection performance in massive user settings can be characterized using state evolution  \cite{5695122}. Senel \textit{et al.} in \cite{8444464} proposed an activity detection scheme by using symmetric Bernoulli pilot sequences under the assumption of state evolution. In \cite{8262800}, an RA protocol for short packet transmission was presented by viewing activity detection as a  group testing (GT) problem. Our previous works in \cite{9500808,9593173} address the activity detection problem  taking into account the  access delay and energy constraints.  Chen \textit{et al}. in \cite{7852531,6691257} proposed a novel many-access channel (MnAC) model for systems with many transmitters and a single receiver in which the number of transmitters grows with the blocklength.  Another related work is \cite{8849288} where the problem of designing unsourced  massive random  access
architectures for an AWGN  random access channel with random fading
gains unknown to the receiver is discussed. 

Most of the existing literature assumes either a Gaussian channel or a block-fading channel model where the channel remains static in each block in which channel state information (CSI) can be accurately obtained at the BS through the use of pilots before activity detection. For secure mMTC networks such as the FHSS systems mentioned before, a more accurate channel model would incorporate fast fading where reliable channel estimates cannot be obtained \cite{923716}. 

In  this paper, unlike the coherent detection approaches, we focus on efficiently detecting the active devices in an mMTC system using a threshold-based energy detection which is  a low-complex energy efficient  non-coherent method \cite{5174497,7514754}. Active devices in the network transmit their OOK modulated unique preambles synchronously and  BS performs energy detection. Our aim is estimate the set of active devices reliably using the  binary energy detector outputs corresponding to the joint OOK preamble transmissions without CSI knowledge. Note that our framework can accommodate a massive number of users since we are not constrained to orthogonal preambles. In addition, as preambles are designed independently for each user, new users can be added without network-wide changes thereby promoting  scalability.

The above framework for recovering the sparse set of active devices using binary measurements can be viewed as an instance of the well known GT  problem 
\cite{8926588}. In essence,  GT aims to identify a small number of ``defective''
items within a large population by performing tests on pools of items. A test
is classified as positive if the pool contains at least one defective, and negative if it contains
no defectives. Though connection between GT and RA  has previously been exploited in the literature \cite{1096146,8945,1057026,8849823,article}, the channel models used are often too simplistic or  unrealistic for many secure mMTC networks such as the one using FHSS techniques to ensure data security. In our paper, we bridge this gap by incorporating a fast fading channel to the GT model. To the best of our knowledge, this is the first work that integrates an OOK modulated preamble signaling at the sensors, a fast fading channel model and a non-coherent threshold-based energy detection at the BS for identifying the sparsely active devices in an mMTC network.
 
The main contributions of this paper can be summarized as follows:
\begin{itemize}
\item 
We propose a novel non-coherent activity detection strategy  for secure energy-constrained mMTC networks in which active devices jointly transmit their OOK-modulated preambles  and  the BS uses a non-coherent energy detection strategy to identify the active devices.

\item We use  information theoretic  tools to characterize the minimum user identification cost,  defined as the minimum number of channel-uses required to identify the active users \cite{9517965,7852531}.  Using results from GT \cite{8926588}, we derive an achievable minimum user identification cost for our activity detection framework in the $k=\Theta(1)$ regime as well as a valid lower bound in the general sub-linear regime where $k=\Theta(\ell^\alpha); 0 \leq \alpha<1$ which is shown to be tight when  $\alpha =0$. Here, $k$ denotes the number of active devices whereas $\ell$ denotes the total number of devices.

\item \sloppy We present and study several practical schemes  for active device identification in the non-coherent OOK setting based on Noisy-Combinatorial Orthogonal Matching Pursuit (N-COMP) and Belief Propagation (BP) techniques \cite{6120391,5169989}. We also present the  performance gap incurred by these practical strategies in comparison to our theoretical characterization of minimum user identification cost. We show that BP can come very close, and  with some practical relaxations such as partial recovery where few misdetections and false positives are allowed, can even exceed  the theoretical bounds.

   \end{itemize}

The remainder of this work is structured as follows. In Section II, we define the system model for the problem of active device identification in an mMTC network and establish its equivalence to a GT problem. In addition, we formally define the key metric of interest, viz, the minimum user identification cost. In Section III, we derive the minimum user identification cost for the non-coherent many-access channel (MnAC) with $\ell$ total and $k$ active devices by viewing active device identification as a decoding problem in an equivalent point-to-point communication channel. Moreover, we present a decoder achieving the minimum user identification cost in the $k =\Theta(1)$ regime which has severe computational complexity. In Section IV, we present several practical strategies for  user identification  under the constraints of   exact recovery and partial recovery and compare them against the minimum user identification cost. Section V concludes the paper.


\section{System model}~\label{sec:sysmodl} 
\subsection{Network model}
Consider an mMTC network consisting of $\ell$ users belonging to the set $\mathcal{D}=\{1,2,\ldots,\ell\}$. Among the $\ell$ users, only a subset of $k$  are active and wants to access the channel. We assume that the value of $k$ is periodically estimated at the BS and hence known \textit{a-priori} \cite{4085381}. The active user set  denoted by $\mathcal{A}=\{a_1,a_2\ldots, a_k\}$  is assumed to be randomly chosen among all ${\ell \choose k} $ subsets of size $k$ from $\mathcal{D}$ and is unknown to the BS. The BS aims to identify the active user set $\mathcal{A}$ in a time-efficient manner so that they could be allocated required resources for  prompt channel access. 

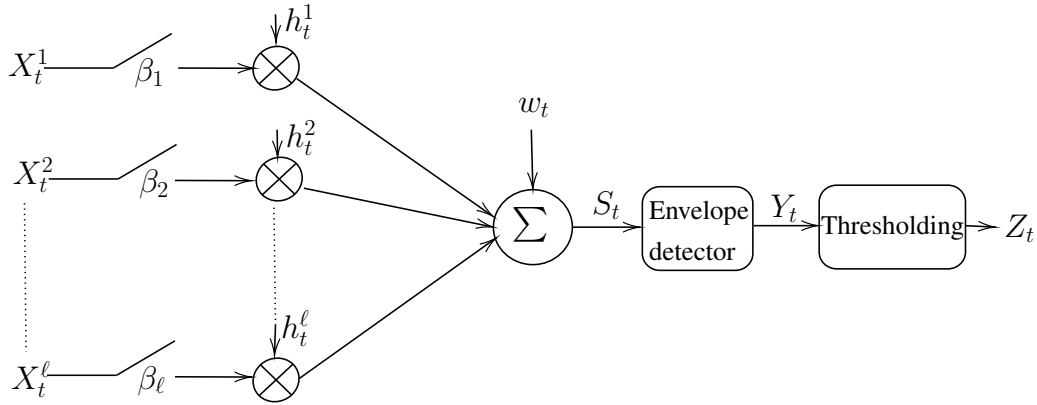
\begin{figure}   
	\centering
	\resizebox{5.5in}{!}{
\tikzset{every picture/.style={line width=1pt}} 

\begin{tikzpicture}[x=1pt,y=0.95pt,yscale=-1,xscale=1]

\draw  [dash pattern={on 0.84pt off 2.51pt}]  (55.67,152.17) -- (56.08,193.14) -- (56.67,251.17) ;
\draw    (153,62) -- (199.67,62.16) ;
\draw [shift={(201.67,62.17)}, rotate = 180.2] [color={rgb, 255:red, 0; green, 0; blue, 0 }  ][line width=0.75]    (10.93,-3.29) .. controls (6.95,-1.4) and (3.31,-0.3) .. (0,0) .. controls (3.31,0.3) and (6.95,1.4) .. (10.93,3.29)   ;
\draw    (151,137) -- (197.67,137.16) ;
\draw [shift={(199.67,137.17)}, rotate = 180.2] [color={rgb, 255:red, 0; green, 0; blue, 0 }  ][line width=0.75]    (10.93,-3.29) .. controls (6.95,-1.4) and (3.31,-0.3) .. (0,0) .. controls (3.31,0.3) and (6.95,1.4) .. (10.93,3.29)   ;
\draw    (151,268) -- (197.67,268.16) ;
\draw [shift={(199.67,268.17)}, rotate = 180.2] [color={rgb, 255:red, 0; green, 0; blue, 0 }  ][line width=0.75]    (10.93,-3.29) .. controls (6.95,-1.4) and (3.31,-0.3) .. (0,0) .. controls (3.31,0.3) and (6.95,1.4) .. (10.93,3.29)   ;
\draw   (225.72,51.59) .. controls (231.16,57.15) and (230.77,66.35) .. (224.85,72.15) .. controls (218.93,77.94) and (209.72,78.14) .. (204.28,72.58) .. controls (198.84,67.02) and (199.23,57.82) .. (205.15,52.02) .. controls (211.07,46.22) and (220.28,46.03) .. (225.72,51.59) -- cycle ; \draw   (225.72,51.59) -- (204.28,72.58) ; \draw   (224.85,72.15) -- (205.15,52.02) ;
\draw   (227.72,125.59) .. controls (233.16,131.15) and (232.77,140.35) .. (226.85,146.15) .. controls (220.93,151.94) and (211.72,152.14) .. (206.28,146.58) .. controls (200.84,141.02) and (201.23,131.82) .. (207.15,126.02) .. controls (213.07,120.22) and (222.28,120.03) .. (227.72,125.59) -- cycle ; \draw   (227.72,125.59) -- (206.28,146.58) ; \draw   (226.85,146.15) -- (207.15,126.02) ;
\draw   (225.72,259.59) .. controls (231.16,265.15) and (230.77,274.35) .. (224.85,280.15) .. controls (218.93,285.94) and (209.72,286.14) .. (204.28,280.58) .. controls (198.84,275.02) and (199.23,265.82) .. (205.15,260.02) .. controls (211.07,254.22) and (220.28,254.03) .. (225.72,259.59) -- cycle ; \draw   (225.72,259.59) -- (204.28,280.58) ; \draw   (224.85,280.15) -- (205.15,260.02) ;
\draw    (214.67,233.17) -- (214.06,252) ;
\draw [shift={(214,254)}, rotate = 271.83] [color={rgb, 255:red, 0; green, 0; blue, 0 }  ][line width=0.75]    (10.93,-3.29) .. controls (6.95,-1.4) and (3.31,-0.3) .. (0,0) .. controls (3.31,0.3) and (6.95,1.4) .. (10.93,3.29)   ;
\draw    (215.67,103.17) -- (215.96,120) ;
\draw [shift={(216,122)}, rotate = 268.99] [color={rgb, 255:red, 0; green, 0; blue, 0 }  ][line width=0.75]    (10.93,-3.29) .. controls (6.95,-1.4) and (3.31,-0.3) .. (0,0) .. controls (3.31,0.3) and (6.95,1.4) .. (10.93,3.29)   ;
\draw    (214.67,26.17) -- (214.96,43) ;
\draw [shift={(215,45)}, rotate = 268.99] [color={rgb, 255:red, 0; green, 0; blue, 0 }  ][line width=0.75]    (10.93,-3.29) .. controls (6.95,-1.4) and (3.31,-0.3) .. (0,0) .. controls (3.31,0.3) and (6.95,1.4) .. (10.93,3.29)   ;
\draw    (228,69) -- (351.06,159.98) ;
\draw [shift={(352.67,161.17)}, rotate = 216.48] [color={rgb, 255:red, 0; green, 0; blue, 0 }  ][line width=0.75]    (10.93,-3.29) .. controls (6.95,-1.4) and (3.31,-0.3) .. (0,0) .. controls (3.31,0.3) and (6.95,1.4) .. (10.93,3.29)   ;
\draw    (232.67,142.17) -- (350.71,167.74) ;
\draw [shift={(352.67,168.17)}, rotate = 192.23] [color={rgb, 255:red, 0; green, 0; blue, 0 }  ][line width=0.75]    (10.93,-3.29) .. controls (6.95,-1.4) and (3.31,-0.3) .. (0,0) .. controls (3.31,0.3) and (6.95,1.4) .. (10.93,3.29)   ;
\draw    (229.67,269.17) -- (353.07,176.37) ;
\draw [shift={(354.67,175.17)}, rotate = 143.06] [color={rgb, 255:red, 0; green, 0; blue, 0 }  ][line width=0.75]    (10.93,-3.29) .. controls (6.95,-1.4) and (3.31,-0.3) .. (0,0) .. controls (3.31,0.3) and (6.95,1.4) .. (10.93,3.29)   ;
\draw  [dash pattern={on 0.84pt off 2.51pt}]  (213.67,155.17) -- (214.67,233.17) ;
\draw   (352.67,168.17) .. controls (352.67,154.36) and (363.86,143.17) .. (377.67,143.17) .. controls (391.47,143.17) and (402.67,154.36) .. (402.67,168.17) .. controls (402.67,181.97) and (391.47,193.17) .. (377.67,193.17) .. controls (363.86,193.17) and (352.67,181.97) .. (352.67,168.17) -- cycle ;
\draw    (376.67,103.17) -- (377.62,141.17) ;
\draw [shift={(377.67,143.17)}, rotate = 268.57] [color={rgb, 255:red, 0; green, 0; blue, 0 }  ][line width=0.75]    (10.93,-3.29) .. controls (6.95,-1.4) and (3.31,-0.3) .. (0,0) .. controls (3.31,0.3) and (6.95,1.4) .. (10.93,3.29)   ;
\draw    (402.67,168.17) -- (443.67,168.17) ;
\draw [shift={(445.67,168.17)}, rotate = 180] [color={rgb, 255:red, 0; green, 0; blue, 0 }  ][line width=0.75]    (10.93,-3.29) .. controls (6.95,-1.4) and (3.31,-0.3) .. (0,0) .. controls (3.31,0.3) and (6.95,1.4) .. (10.93,3.29)   ;
\draw   (447,152.33) .. controls (447,146.17) and (452,141.17) .. (458.17,141.17) -- (505.83,141.17) .. controls (512,141.17) and (517,146.17) .. (517,152.33) -- (517,185.83) .. controls (517,192) and (512,197) .. (505.83,197) -- (458.17,197) .. controls (452,197) and (447,192) .. (447,185.83) -- cycle ;
\draw    (517.67,167.17) -- (556.67,167.17) ;
\draw [shift={(558.67,167.17)}, rotate = 180] [color={rgb, 255:red, 0; green, 0; blue, 0 }  ][line width=0.75]    (10.93,-3.29) .. controls (6.95,-1.4) and (3.31,-0.3) .. (0,0) .. controls (3.31,0.3) and (6.95,1.4) .. (10.93,3.29)   ;
\draw   (559,151.33) .. controls (559,145.17) and (564,140.17) .. (570.17,140.17) -- (640.5,140.17) .. controls (646.67,140.17) and (651.67,145.17) .. (651.67,151.33) -- (651.67,184.83) .. controls (651.67,191) and (646.67,196) .. (640.5,196) -- (570.17,196) .. controls (564,196) and (559,191) .. (559,184.83) -- cycle ;
\draw    (651.67,167.17) -- (668.67,168.06) ;
\draw [shift={(670.67,168.17)}, rotate = 183.01] [color={rgb, 255:red, 0; green, 0; blue, 0 }  ][line width=0.75]    (10.93,-3.29) .. controls (6.95,-1.4) and (3.31,-0.3) .. (0,0) .. controls (3.31,0.3) and (6.95,1.4) .. (10.93,3.29)   ;
\draw    (68,62) -- (112,62) ;
\draw    (112,62) -- (149,39) ;

\draw    (71,136) -- (115,136) ;
\draw    (115,136) -- (152,113) ;

\draw    (70,267) -- (114,267) ;
\draw    (114,267) -- (151,244) ;

\draw (44,48.4) node [anchor=north west][inner sep=0.75pt]   [font=\LARGE]{$X_{t}^{1}$};
\draw (47,120.4) node [anchor=north west][inner sep=0.75pt]   [font=\LARGE]{$X_{t}^{2}$};
\draw (46,256.4) node [anchor=north west][inner sep=0.75pt]   [font=\LARGE] {$X_{t}^{\ell}$};
\draw (219,18.4) node [anchor=north west][inner sep=0.75pt]  [font=\LARGE] {$h_t^{1}$};
\draw (220,95.4) node [anchor=north west][inner sep=0.75pt]  [font=\LARGE]  {$h_t^{2}$};
\draw (217,222.4) node [anchor=north west][inner sep=0.75pt] [font=\LARGE]  {$h_t^{\ell}$};
\draw (363,153.4) node [anchor=north west][inner sep=0.75pt]  [font=\huge]  {$\sum $};
\draw (367,80.4) node [anchor=north west][inner sep=0.75pt]  [font=\LARGE] {$w_{t}$};
\draw (414,143.4) node [anchor=north west][inner sep=0.75pt]  [font=\LARGE]  {$S_{t}$};
\draw (559,159) node [anchor=north west][inner sep=0.75pt]  [font=\Large][align=left] {Thresholding};
\draw (450,149.33) node [anchor=north west][inner sep=0.75pt]  [font=\Large] [align=left] {Envelope \\[-10pt] detector};
\draw (527,145.4) node [anchor=north west][inner sep=0.75pt]  [font=\LARGE] {$Y_{t}$};
\draw (675,158.4) node [anchor=north west][inner sep=0.75pt][font=\LARGE]  {$Z_{t}$};
\draw (124,54.4) node [anchor=north west][inner sep=0.75pt]  [font=\LARGE]{$\beta _{1}$};
\draw (127,128.4) node [anchor=north west][inner sep=0.75pt]   [font=\LARGE]{$\beta _{2}$};
\draw (126,259.4) node [anchor=north west][inner sep=0.75pt] [font=\LARGE]  {$\beta _{\ell}$};

\end{tikzpicture}

}
	\setlength{\belowcaptionskip}{-10pt}
	\caption{Non-coherent $(\ell,k)$-Many Access Channel. }
	\label{fig:qtzn}
	\end{figure}

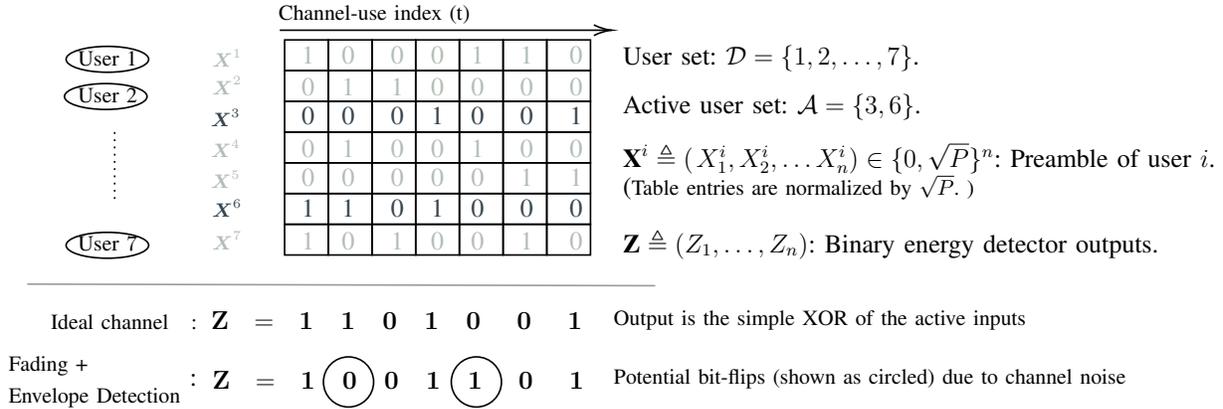
\begin{figure}   
	\centering
	\resizebox{6.5in}{!}{
\tikzset{every picture/.style={line width=0.75pt}} 

\begin{tikzpicture}[x=0.75pt,y=0.75pt,yscale=-1,xscale=1]

\draw  [dash pattern={on 0.84pt off 2.51pt}]  (59,90) -- (59,132) ;
\draw   (28,68) .. controls (28,63.58) and (39.42,60) .. (53.5,60) .. controls (67.58,60) and (79,63.58) .. (79,68) .. controls (79,72.42) and (67.58,76) .. (53.5,76) .. controls (39.42,76) and (28,72.42) .. (28,68) -- cycle ;
\draw   (29,45) .. controls (29,40.58) and (40.42,37) .. (54.5,37) .. controls (68.58,37) and (80,40.58) .. (80,45) .. controls (80,49.42) and (68.58,53) .. (54.5,53) .. controls (40.42,53) and (29,49.42) .. (29,45) -- cycle ;

\draw   (29,160) .. controls (29,155.58) and (40.42,152) .. (54.5,152) .. controls (68.58,152) and (80,155.58) .. (80,160) .. controls (80,164.42) and (68.58,168) .. (54.5,168) .. controls (40.42,168) and (29,164.42) .. (29,160) -- cycle ;

\draw   (164,33) -- (191,33) -- (191,52) -- (164,52) -- cycle ;
\draw   (191,33) -- (218,33) -- (218,52) -- (191,52) -- cycle ;
\draw   (218,33) -- (245,33) -- (245,52) -- (218,52) -- cycle ;
\draw   (245,33) -- (272,33) -- (272,52) -- (245,52) -- cycle ;
\draw   (272,33) -- (299,33) -- (299,52) -- (272,52) -- cycle ;
\draw   (299,33) -- (326,33) -- (326,52) -- (299,52) -- cycle ;
\draw   (326,33) -- (353,33) -- (353,52) -- (326,52) -- cycle ;

\draw   (164,52) -- (191,52) -- (191,71) -- (164,71) -- cycle ;
\draw   (191,52) -- (218,52) -- (218,71) -- (191,71) -- cycle ;
\draw   (218,52) -- (245,52) -- (245,71) -- (218,71) -- cycle ;
\draw   (245,52) -- (272,52) -- (272,71) -- (245,71) -- cycle ;
\draw   (272,52) -- (299,52) -- (299,71) -- (272,71) -- cycle ;
\draw   (299,52) -- (326,52) -- (326,71) -- (299,71) -- cycle ;
\draw   (326,52) -- (353,52) -- (353,71) -- (326,71) -- cycle ;

\draw   (164,71) -- (191,71) -- (191,90) -- (164,90) -- cycle ;
\draw   (191,71) -- (218,71) -- (218,90) -- (191,90) -- cycle ;
\draw   (218,71) -- (245,71) -- (245,90) -- (218,90) -- cycle ;
\draw   (245,71) -- (272,71) -- (272,90) -- (245,90) -- cycle ;
\draw   (272,71) -- (299,71) -- (299,90) -- (272,90) -- cycle ;
\draw   (299,71) -- (326,71) -- (326,90) -- (299,90) -- cycle ;
\draw   (326,71) -- (353,71) -- (353,90) -- (326,90) -- cycle ;

\draw   (164,147) -- (191,147) -- (191,166) -- (164,166) -- cycle ;
\draw   (191,147) -- (218,147) -- (218,166) -- (191,166) -- cycle ;
\draw   (218,147) -- (245,147) -- (245,166) -- (218,166) -- cycle ;
\draw   (245,147) -- (272,147) -- (272,166) -- (245,166) -- cycle ;
\draw   (272,147) -- (299,147) -- (299,166) -- (272,166) -- cycle ;
\draw   (299,147) -- (326,147) -- (326,166) -- (299,166) -- cycle ;
\draw   (326,147) -- (353,147) -- (353,166) -- (326,166) -- cycle ;

\draw   (164,90) -- (191,90) -- (191,109) -- (164,109) -- cycle ;
\draw   (191,90) -- (218,90) -- (218,109) -- (191,109) -- cycle ;
\draw   (218,90) -- (245,90) -- (245,109) -- (218,109) -- cycle ;
\draw   (245,90) -- (272,90) -- (272,109) -- (245,109) -- cycle ;
\draw   (272,90) -- (299,90) -- (299,109) -- (272,109) -- cycle ;
\draw   (299,90) -- (326,90) -- (326,109) -- (299,109) -- cycle ;
\draw   (326,90) -- (353,90) -- (353,109) -- (326,109) -- cycle ;

\draw   (164,109) -- (191,109) -- (191,128) -- (164,128) -- cycle ;
\draw   (191,109) -- (218,109) -- (218,128) -- (191,128) -- cycle ;
\draw   (218,109) -- (245,109) -- (245,128) -- (218,128) -- cycle ;
\draw   (245,109) -- (272,109) -- (272,128) -- (245,128) -- cycle ;
\draw   (272,109) -- (299,109) -- (299,128) -- (272,128) -- cycle ;
\draw   (299,109) -- (326,109) -- (326,128) -- (299,128) -- cycle ;
\draw   (326,109) -- (353,109) -- (353,128) -- (326,128) -- cycle ;

\draw   (164,128) -- (191,128) -- (191,147) -- (164,147) -- cycle ;
\draw   (191,128) -- (218,128) -- (218,147) -- (191,147) -- cycle ;
\draw   (218,128) -- (245,128) -- (245,147) -- (218,147) -- cycle ;
\draw   (245,128) -- (272,128) -- (272,147) -- (245,147) -- cycle ;
\draw   (272,128) -- (299,128) -- (299,147) -- (272,147) -- cycle ;
\draw   (299,128) -- (326,128) -- (326,147) -- (299,147) -- cycle ;
\draw   (326,128) -- (353,128) -- (353,147) -- (326,147) -- cycle ;

\draw    (161,27) -- (364,27) ;
\draw [shift={(366,27)}, rotate = 180] [color={rgb, 255:red, 0; green, 0; blue, 0 }  ][line width=0.75]    (10.93,-3.29) .. controls (6.95,-1.4) and (3.31,-0.3) .. (0,0) .. controls (3.31,0.3) and (6.95,1.4) .. (10.93,3.29)   ;
\draw [color={rgb, 255:red, 155; green, 155; blue, 155 }  ,draw opacity=1 ]   (5,184) -- (393,185) ;
\draw   (188,244.5) .. controls (188,235.94) and (194.94,229) .. (203.5,229) .. controls (212.06,229) and (219,235.94) .. (219,244.5) .. controls (219,253.06) and (212.06,260) .. (203.5,260) .. controls (194.94,260) and (188,253.06) .. (188,244.5) -- cycle ;
\draw   (267,244.5) .. controls (267,235.94) and (273.94,229) .. (282.5,229) .. controls (291.06,229) and (298,235.94) .. (298,244.5) .. controls (298,253.06) and (291.06,260) .. (282.5,260) .. controls (273.94,260) and (267,253.06) .. (267,244.5) -- cycle ;

\draw (173,35) node [anchor=north west][inner sep=0.75pt]  [color={rgb, 255:red, 178; green, 190; blue, 181 }  ,opacity=1 ] [align=left] {1 \ \ 0 \ \ \ 0 \ \ 0 \ \ 1 \ \ \ 1 \ \ \ 0 \ };
\draw (173,55) node [anchor=north west][inner sep=0.75pt]  [color={rgb, 255:red, 178; green, 190; blue, 181 }  ,opacity=1 ] [align=left] {0 \ \ 1 \ \ \ 1 \ \ 0 \ \ 0 \ \ \ 0 \ \ \ 0 \ };
\draw (173,73) node [anchor=north west][inner sep=0.75pt]  [color={rgb, 255:red, 54; green, 69; blue, 79 }  ,opacity=1 ] [align=left] {0 \ \ 0 \ \ \ 0 \ \ 1 \ \ 0 \ \ \ 0 \ \ \ 1 \ };
\draw (173,93) node [anchor=north west][inner sep=0.75pt]  [color={rgb, 255:red, 178; green, 190; blue, 181 }  ,opacity=1 ] [align=left] {0 \ \ 1 \ \ \ 0 \ \ 0 \ \ 1 \ \ \ 0 \ \ \ 0 \ };
\draw (173,111) node [anchor=north west][inner sep=0.75pt]  [color={rgb, 255:red, 178; green, 190; blue, 181 }  ,opacity=1 ] [align=left] {0 \ \ 0 \ \ \ 0 \ \ 0  \ \ 0 \ \ \ 1 \ \ \ 1 \ };
\draw (173,150) node [anchor=north west][inner sep=0.75pt]  [color={rgb, 255:red, 178; green, 190; blue, 181 }  ,opacity=1 ] [align=left] {1 \ \ 0 \ \ \ 1 \ \ 0 \ \ 0 \ \ \ 1 \ \ \ 0 \ };
\draw (173,130) node [anchor=north west][inner sep=0.75pt]  [color={rgb, 255:red, 54; green, 69; blue, 79}  ,opacity=1 ] [align=left] {1 \ \ 1 \ \ \ 0 \ \ 1 \ \ 0 \ \ \ 0 \ \ \ 0 \ };
\draw (35,153) node [anchor=north west][inner sep=0.75pt]  [font=\small] [align=left] {User $\displaystyle 7$};
\draw (35,38) node [anchor=north west][inner sep=0.75pt]  [font=\small] [align=left] {User 1};
\draw (35,61) node [anchor=north west][inner sep=0.75pt]  [font=\small] [align=left] {User 2};
\draw (118,37) node [anchor=north west][inner sep=0.75pt] [color={rgb, 255:red, 178; green, 190; blue, 181 }  ,opacity=1 ] [font=\footnotesize] [align=left] {$\displaystyle \boldsymbol{X}^{1}$};
\draw (118,54) node [anchor=north west][inner sep=0.75pt] [color={rgb, 255:red, 178; green, 190; blue, 181 }  ,opacity=1 ] [font=\footnotesize] [align=left] {$\displaystyle \boldsymbol{X}^{2}$};
\draw (117,74) node [anchor=north west][inner sep=0.75pt] [color={rgb, 255:red, 54; green, 69; blue, 79}  ,opacity=1 ] [font=\footnotesize] [align=left] {$\displaystyle {\boldsymbol{X}^{3}}$};
\draw (118,130) node [anchor=north west][inner sep=0.75pt] [color={rgb, 255:red, 54; green, 69; blue, 79}  ,opacity=1 ] [font=\footnotesize] [align=left] {$\displaystyle {\boldsymbol{X}^{6}}$};
\draw (117,112) node [anchor=north west][inner sep=0.75pt] [color={rgb, 255:red, 178; green, 190; blue, 181 }  ,opacity=1 ] [font=\footnotesize] [align=left] {$\displaystyle \boldsymbol{X}^{5}$};
\draw (117,93) node [anchor=north west][inner sep=0.75pt] [color={rgb, 255:red, 178; green, 190; blue, 181 }  ,opacity=1 ] [font=\footnotesize] [align=left] {$\displaystyle \boldsymbol{X}^{4}$};
\draw (118,150) node [anchor=north west][inner sep=0.75pt][color={rgb, 255:red, 178; green, 190; blue, 181 }  ,opacity=1 ]  [font=\footnotesize] [align=left] {$\displaystyle \boldsymbol{X}^{7}$};
\draw (159,10) node [anchor=north west][inner sep=0.75pt]  [font=\small] [align=left] {Channel-use index (t)};
\draw (366,199) node [anchor=north west][inner sep=0.75pt]   [align=left] {{\small Output is the simple XOR of the active inputs}};
\draw (118,236.4) node [anchor=north west][inner sep=0.75pt]    {$\mathbf{Z\ \ =\ \ 1  \ \ \ 0\ \ \ 0\ \ \ 1\ \ \ 1 \ \ \ \ 0\ \ \ \ 1} \ $};
\draw (-8,227) node [anchor=north west][inner sep=0.75pt]   [align=left] { {\small Fading +}\\[-10pt]{\small Envelope Detection }};
\draw (117,199.4) node [anchor=north west][inner sep=0.75pt]    {$\mathbf{Z\ \ =\ \ 1\ \ \ 1\ \ \ 0\ \ \ 1\ \ \ 0\ \ \ \ 0\ \ \ \ 1} \ $};

\draw (-10,201) node [anchor=north west][inner sep=0.75pt]   [align=left] { \ \ \ \ \  {\small Ideal channel  \ \ :} };
\draw (104,238) node [anchor=north west][inner sep=0.75pt]   [align=left] {:};
\draw (366,235) node [anchor=north west][inner sep=0.75pt]   [align=left] {{\small Potential bit-flips (shown as circled) due to channel noise}};
\draw (366,35) node [anchor=north west][inner sep=0.75pt]   [align=left] {{ User set: $\mathcal{D}=\{1,2,\ldots,7\}$.}};

\draw (366,65) node [anchor=north west][inner sep=0.75pt]   [align=left] {{ Active user set: $\mathcal{A}=\{3,6\}$.}};

\draw (366,95) node [anchor=north west][inner sep=0.75pt]   [align=left] {{ $\textbf{X}^{i} \triangleq (\hspace{0.05cm}X_1^i,X_2^i,\ldots X_{n}^i) \in \{0,\sqrt{P}\}^{n} $: Preamble of user $i$.}};
\draw (366,115) node [anchor=north west][inner sep=0.75pt]   [align=left] {\ {(\small Table entries are normalized by $\sqrt{P}$. )} };
\draw (366,148) node [anchor=north west][inner sep=0.75pt]   [align=left] {{ $\textbf{Z} \triangleq (Z_1,\ldots, Z_n)$: Binary   energy detector outputs. }};
\end{tikzpicture}
}
	\setlength{\belowcaptionskip}{-10pt}
	\caption{Non-coherent $(7,2)$-MnAC and its equivalence to GT. Users 3 and 6 are assumed to be active. }
	\label{fig:qtzn2}
	\end{figure}
Taking into account sparse user activity, we focus on the asymptotic regime where $\ell$ can be unbounded  while the number of active users scale as $k=\Theta(\ell^{\alpha});$  $0 \leq \alpha <1$. This is  consistent with the notion of  Many Access channel (MnAC) model introduced by Chen \textit{et al.} \cite{7852531}  with the difference that instead of our combinatorial setting with $k$ active users, \cite{7852531}  considers the setting where each user can be active  probabilistically.

For purposes of activity detection, each user $i$ is assigned  an i.i.d  binary preamble  $\textbf{X}^{i} \triangleq (\hspace{0.05cm}X_1^i,X_2^i,\ldots X_{n}^i) \in \{0,\sqrt{P}\}^{n} $ of length $n$ generated via a  Bernoulli process with probability  $q_i$, $\forall i \in \mathcal{D}$. The sampling probability vector is defined as $\textbf{q}:=(q_1,q_2,\ldots,q_{\ell})$. We use $\mathcal{P}$ to denote the entire set of preambles. During activity detection,  the active users transmit their OOK-modulated preamble in a time-synchronized manner over $n$ channel-uses with  power $P$ during the `On' symbols. The received symbol $S_t$ during the $t^{th}$ channel-use is  
\begin{equation}
    S_t=(\boldsymbol{\beta} \circ \mathbf{h}_t) \cdot \mathbf{X}_{t}+W_t, \forall t \in\{1,2,\ldots,n\}.
\end{equation}
Here, $\textbf{X}_t=(X^1_t,\ldots, X^\ell_t)$ is the vector representing the $t^{th}$ preamble entries of the $\ell$ users and  $\textbf{h}_t=(h^1_t,\ldots,h^\ell_{t})$ is the vector of channel coefficients of the $\ell$ users during the $t^{th}$ channel-use  where $h_t^i$'s are   i.i.d  $ \mathcal{C} \mathcal{N}\left(0,\sigma^{2}\right), \forall i \in \mathcal{D}, \forall t \in \{1,2,\ldots n\} $.  The AWGN noise vector $\textbf{W}=(W_1,\ldots, W_n)$ is composed of i.i.d entries,  $W_t   \sim \mathcal{C} \mathcal{N}\left(0, \sigma_w^{2}\right)$ and is independent of $\textbf{h}_t$ $\forall t\in \{1,2,\ldots,n\}$.  The vector $\boldsymbol{\beta}=(\beta_{1},\ldots,\beta_{\ell})$ represents the activity status such  that $\beta_{i}=1$, if $i^{th}$ user is active; 0 otherwise.  $\boldsymbol{\beta} \circ \mathbf{h}_t$ denotes the element-wise product between the vectors $\boldsymbol{\beta}$ and $\mathbf{h}_t$. 

We assume that there is no CSI available at the BS since we are operating in a fast fading scenario such as FHSS or  mobile IoT environments \cite{6477839,923716}. i.e., $\textbf{h}_t, \forall t\in \{1,2,\ldots,n\}$ is unknown \textit{a-priori}. Hence, the BS employs non-coherent detection \cite{6120373} as shown in Fig. 1. During the $t^{th}$ channel use, the envelope detector processes  $S_t$ to obtain $Y_{t}=|S_t|$, the envelope of $S_t$.  Thresholding  after envelope detection produces a binary output $Z_{t}$ such that
\begin{equation}
  Z_{t}=1 \text { if } |S_t|^2>\gamma ; \text { else } Z_{t}=0.  
  \label{threshold}
\end{equation} Here, $\gamma $ represents a pre-determined threshold parameter.  

Here onwards, we use the terminology  \textit{non-coherent $(\ell,k)$-MnAC}  to  refer to the constrained version of $(\ell,k)$-MnAC channel model described above where devices are limited by OOK preamble transmissions and base station is limited by non-coherent energy detection.

\subsection{Problem formulation}
Given the preambles $ \mathcal{P} =\{\textbf{X}^i: i\in \mathcal{D}\}$  and the vector $\textbf{Z}=(Z_1,\ldots, Z_n)$  of channel outputs, BS aims to identify the set of active users $\mathcal{A}$, alternatively the vector $\boldsymbol{\beta}$, during the activity detection phase. We use the following definitions:

\begin{definition}
\textbf{ Activity detection}: For a given set of preambles $\mathcal{P} = \{\textbf{X}^i: i\in \mathcal{D}\}$, an activity detection function  $\hat{\boldsymbol{\beta}}:\{0,1\}^{n} \rightarrow \{0,1\}^{\ell}$  is a deterministic rule that maps the binary valued channel outputs $\textbf{Z}$ to an estimate $\hat{\boldsymbol{\beta}}$ of the  activity status vector such that the Hamming weight of $\hat{\boldsymbol{\beta}}$ is $k$.
\end{definition}

\begin{definition}
\textbf{Probability of erroneous identification}: For an activity detection function  $\hat{\boldsymbol{\beta}}$, probability of erroneous identification  $\mathbb{P}_e^{(\ell)}$ is defined as \begin{equation}
     \mathbb{P}_e^{(\ell)}:=\frac{1}{{\ell \choose k}} \sum_{\boldsymbol{\beta}:\sum_{i=1}^{\ell}{\beta_i}=k} \mathbb{P}(\hat{\boldsymbol{\beta}} \neq \boldsymbol{\beta}). \label{errid}
 \end{equation} 
\end{definition}

\begin{definition}
\textbf{Minimum user identification cost} \cite{7852531}: The minimum user identification
cost is said to be $n(\ell)$ if for every $0 < \epsilon <1$, there exists a preamble of length $n_0 = (1 + \epsilon)n(\ell)$ such that the probability
of erroneous identification  tends to 0 as $\ell \rightarrow \infty$, whereas
the error probability is strictly bounded away from 0 if
$n_0 = (1 - \epsilon)n(\ell)$.
\end{definition}

Note that  the minimum user identification cost  $n(\ell)$ is a function of the number of users $\ell$ and is defined based on its asymptotic behaviour as $\ell \rightarrow \infty$. 

 Our goal is to characterize $n(\ell)$  of the non-coherent $(\ell,k)-$MnAC  which would be indicative of the time-efficiency of BS in identifying the active users.

\subsection{GT model for active device identification}

GT problem \cite{8926588} is a sparse inference problem where the objective is to identify a sparse set of \textit{defective} items from a large set of items by performing tests on groups of items. In the error-free scenario, each group test produces a binary output where a \textbf{1} corresponds to the inclusion of at least one defective in the group being tested whereas a \textbf{0} indicates that the tested group of items excludes all defectives. The tests need to be devised such that the defective set of items can be recovered using the binary vector of test outcomes with a minimum number of tests. On a high level, GT models  can be classified as  adaptive and non-adaptive models.  In adaptive GT, the previous test results can be used to design the future tests. In non-adaptive setting, all group tests are designed independent of each other.

To draw the equivalence between active device identification in the non-coherent $(\ell,k)$-MnAC and GT, we can consider the devices in the mMTC network as items and the active devices as defective items. Each channel-use $t \in \{1,2,\ldots, n\}$ corresponds to a group test in which a randomly chosen subset of devices engage in joint transmission of `On' symbols, if active. The random selection of devices during the $n$ testing instances is based on the binary preambles  $\textbf{X}^{i}  \in \{0,\sqrt{P}\}^{n}, \forall i \in \mathcal{D}$. The BS measures received energy during each channel-use and produces a binary 0-1 output $Z_t$ as in  (\ref{threshold}). These 1-bit energy measurements at the receiver side corresponds to the group test results.

In summary, the random binary preambles and the binary energy measurements renders active device identification as a GT problem where testing corresponds to energy detection on the received signal envelope. This is illustrated in Fig. \ref{fig:qtzn} and Fig. \ref{fig:qtzn2}. Note that active device identification in  non-coherent $(\ell,k)$-MnAC is a noisy GT problem since channel noise and fading can probabilistically lead to a 0 or 1 energy detector output even in the presence or absence of active devices respectively. Moreover, our framework corresponds to non-adaptive GT since the testing/grouping pattern is based on predefined binary preambles and is not updated based on the GT results of previous channel-uses.
We will observe in Section \ref{sec3} that the GT model can be leveraged to provide an achievability scheme to the minimum user identification cost for the non-coherent $(\ell,k)-$MnAC when $\alpha = 0 $, ie., $k= O(1)$ and a corresponding lower bound when $\alpha \neq 0 $.

	\section{minimum user identification cost}
	\label{sec3}
	
 In this section, we derive the minimum user identification cost for the non-coherent $(\ell,k)-$ MnAC in the $k=\Theta(1)$ regime as presented in Theorem \ref{thm3}. First, we provide an equivalent characterization of the active device identification problem in  the non-coherent $(\ell,k)-$ MnAC by viewing it as a  decoding problem in a point-to-point  vector input - scalar  output  channel whose inputs correspond to the active users as shown in Fig. \ref{fig:redalpha}. Thereafter, we derive the maximum rate of the equivalent channel by exploiting its cascade structure. Then, we use  GT theory \cite{8926588,6157065} to relate the maximum rate with the minimum user identification cost of non-coherent $(\ell,k)-$MnAC.  Here onwards, we drop the subscript $t$ denoting channel-use index. All logarithms are in base 2.

\subsection{Equivalent  Channel Model} 
 \vspace{0.05cm}
The aim of active device identification  in non-coherent $(\ell,k)-$ MnAC is to identify the set of active devices based on the binary detector output vector $\textbf{Z}$ and the preamble set $\mathcal{P}$.  This can be alternatively viewed as a decoding problem in a traditional point to point communication channel as follows:  The source message is the activity status vector $\boldsymbol{\beta}$ denoting the set of active users $\mathcal{A}$ and the codebook is the entire set of preambles $\mathcal{P}$. Given $\boldsymbol{\beta}$ for an active set $\mathcal{A}=\{a_1,\ldots a_k\}$, the codeword becomes the corresponding subset of preambles $\{\textbf{X}^{i}: i \in \mathcal{A}\}$. Note that since the preambles for each user $i \in \mathcal{A}$ are independently designed using a $Bern(q_i)$ random variable, our equivalent channel model has the additional constraint that the only tunable parameters available for codebook design are the sampling probabilities $\{{q}_i: i \in \mathcal{D}\}$. Hence, identifying active devices reduces to the problem of decoding messages in this equivalent constrained channel. Clearly, the minimum user identification cost for the non-coherent $(\ell,k)-$MnAC is at least the number of channel-uses required to communicate the set of active users; i.e., $\log{\ell\choose k}$ bits of information over this equivalent constrained  point-to-point channel. We formally setup this equivalent channel model below.

Considering the channel in Fig. 1, since the set of active users is $\mathcal{A}=\{a_1,\ldots a_k\}$, the input to the non-coherent $(\ell,k)-$MnAC  is $\Tilde{\textbf{X}}=(X^{a_1},\ldots X^{a_k})$.   Thus,  the signal at the input of envelope detector is 
    $ S=\sqrt{P}\sum_{m=1}^{k}h^{a_m} +W.$
 Let $V=\frac{\sum_{i=1}^{k}X^{a_i}}{\sqrt{P}}$ denote the Hamming weight of $\Tilde{\textbf{X}}$ which is the number of active users transmitting `On' signal during the particular  channel-use we have at hand.
Thus, conditioned on $V=v$, $U:=|S|^{2}$ follows an exponential  distribution given by
\begin{equation}
     f_{U|V}(u|v)=\frac{1}{v \sigma^{2}P+\sigma_{w}^{2}} e^{-\frac{u}{v \sigma^{2}P+\sigma_{w}^{2}}}, u \geq 0.
     \label{expdis}
 \end{equation}  As evident from (\ref{threshold}) and (\ref{expdis}), we have $\Tilde{\textbf{X}} \rightarrow V\rightarrow Z$, i.e.,   the transition probability $p\left(z\mid \Tilde{\textbf{x}},v\right)$  is only dependent on the  channel input  $\Tilde{\textbf{X}}=(X^{a_1},X^{a_2},\ldots X^{a_k})$  through its Hamming weight  $V$.  Also,  since $  f_{U|V}$ is an exponential distribution, $p_v:=p\left(Z=0\mid V=v\right)$ can be expressed as
\begin{equation}
p_v=1-e^{-\frac{\gamma}{v \sigma^{2}P+\sigma_{w}^{2}}}. \label{channeleq}
\end{equation} Similarly, $\operatorname{Pr}\left(Z=1\mid V=v\right)=1-p_v.$ Note that $p_v$ is a strictly decreasing function of $v$ where $v \in\{0,1,..,k\}$ assuming  w.l.o.g. positive values for $\sigma^2, \sigma_w^2$ and $\gamma$.

Thus, the non-coherent $(\ell,k)-$MnAC can be equivalently viewed as a traditional point-to-point communication channel whose input is  the active user set as in Fig. \ref{fig:redalpha}. Moreover, this equivalent channel can be modeled as a cascade of two channels; the first channel computes the Hamming weight $V$ of the input $\Tilde{\textbf{X}}$ whereas the second channel translates the Hamming weight $V$ into a binary output $Z$ depending on the fading statistics $(\sigma^2)$, noise variance $(\sigma_w^2)$ of the wireless channel and the non-coherent detector threshold $\gamma$. We exploit this cascade channel structure to establish the minimum user identification cost $n(\ell)$ for the non-coherent $(\ell,k)-$MnAC.

\begin{figure}   
	\centering
	\resizebox{4.5in}{!}{

\tikzset{every picture/.style={line width=1.4pt}} 

\begin{tikzpicture}[x=.9pt,y=.85pt,yscale=-1,xscale=1.1]

\draw   (59.92,86.98) -- (92.32,86.98) -- (92.32,119.45) -- (59.92,119.45) -- cycle ;
\draw   (59.92,131.79) -- (92.32,131.79) -- (92.32,167.63) -- (59.92,167.63) -- cycle ;
\draw   (59.92,221.39) -- (92.32,221.39) -- (92.32,257.23) -- (59.92,257.23) -- cycle ;
\draw  [dash pattern={on 1.84pt off 2.51pt}]  (76.25,175.25) -- (76.77,216.83) ;
\draw    (92.82,145.23) -- (145.76,145.95) ;
\draw [shift={(147.76,145.97)}, rotate = 180.78] [color={rgb, 255:red, 0; green, 0; blue, 0 }  ][line width=0.75]    (10.93,-3.29) .. controls (6.95,-1.4) and (3.31,-0.3) .. (0,0) .. controls (3.31,0.3) and (6.95,1.4) .. (10.93,3.29)   ;
\draw    (91.83,239.31) -- (144.78,240.03) ;
\draw [shift={(146.78,240.06)}, rotate = 180.78] [color={rgb, 255:red, 0; green, 0; blue, 0 }  ][line width=0.75]    (10.93,-3.29) .. controls (6.95,-1.4) and (3.31,-0.3) .. (0,0) .. controls (3.31,0.3) and (6.95,1.4) .. (10.93,3.29)   ;
\draw    (92.82,107.59) -- (145.76,108.31) ;
\draw [shift={(147.76,108.34)}, rotate = 180.78] [color={rgb, 255:red, 0; green, 0; blue, 0 }  ][line width=0.75]    (10.93,-3.29) .. controls (6.95,-1.4) and (3.31,-0.3) .. (0,0) .. controls (3.31,0.3) and (6.95,1.4) .. (10.93,3.29)   ;
\draw   (155.99,80.56) -- (262.26,80.56) -- (262.26,259.77) -- (155.99,259.77) -- cycle ;
\draw    (263.57,96.84) -- (285.26,96.7) ;
\draw [shift={(287.26,96.69)}, rotate = 179.64] [color={rgb, 255:red, 0; green, 0; blue, 0 }  ][line width=0.75]    (10.93,-3.29) .. controls (6.95,-1.4) and (3.31,-0.3) .. (0,0) .. controls (3.31,0.3) and (6.95,1.4) .. (10.93,3.29)   ;
\draw    (263.57,116.55) -- (285.26,116.42) ;
\draw [shift={(287.26,116.4)}, rotate = 179.64] [color={rgb, 255:red, 0; green, 0; blue, 0 }  ][line width=0.75]    (10.93,-3.29) .. controls (6.95,-1.4) and (3.31,-0.3) .. (0,0) .. controls (3.31,0.3) and (6.95,1.4) .. (10.93,3.29)   ;
\draw    (263.57,250.06) -- (285.26,249.92) ;
\draw [shift={(287.26,249.91)}, rotate = 179.64] [color={rgb, 255:red, 0; green, 0; blue, 0 }  ][line width=0.75]    (10.93,-3.29) .. controls (6.95,-1.4) and (3.31,-0.3) .. (0,0) .. controls (3.31,0.3) and (6.95,1.4) .. (10.93,3.29)   ;
\draw  [dash pattern={on 0.84pt off 2.51pt}]  (310.95,137.91) -- (310.95,147.77) -- (310.95,158.52) ;
\draw  [dash pattern={on 0.84pt off 2.51pt}]  (280.95,50.91) -- (280.95,248.52) ;
\draw   (468.87,104.91) .. controls (468.87,92.53) and (479.92,82.5) .. (493.55,82.5) .. controls (507.18,82.5) and (518.22,92.53) .. (518.22,104.91) .. controls (518.22,117.28) and (507.18,127.31) .. (493.55,127.31) .. controls (479.92,127.31) and (468.87,117.28) .. (468.87,104.91) -- cycle ;
\draw   (470.85,235.73) .. controls (470.85,223.35) and (481.89,213.32) .. (495.52,213.32) .. controls (509.15,213.32) and (520.2,223.35) .. (520.2,235.73) .. controls (520.2,248.1) and (509.15,258.13) .. (495.52,258.13) .. controls (481.89,258.13) and (470.85,248.1) .. (470.85,235.73) -- cycle ;
\draw    (321.81,91.32) -- (473.41,249.36) ;
\draw [shift={(474.79,250.81)}, rotate = 226.19] [color={rgb, 255:red, 0; green, 0; blue, 0 }  ][line width=0.75]    (10.93,-3.29) .. controls (6.95,-1.4) and (3.31,-0.3) .. (0,0) .. controls (3.31,0.3) and (6.95,1.4) .. (10.93,3.29)   ;
\draw    (322.79,118.2) -- (473.29,249.49) ;
\draw [shift={(474.79,250.81)}, rotate = 221.1] [color={rgb, 255:red, 0; green, 0; blue, 0 }  ][line width=0.75]    (10.93,-3.29) .. controls (6.95,-1.4) and (3.31,-0.3) .. (0,0) .. controls (3.31,0.3) and (6.95,1.4) .. (10.93,3.29)   ;
\draw    (324.77,250.81) -- (472.79,250.81) ;
\draw [shift={(474.79,250.81)}, rotate = 180] [color={rgb, 255:red, 0; green, 0; blue, 0 }  ][line width=0.75]    (10.93,-3.29) .. controls (6.95,-1.4) and (3.31,-0.3) .. (0,0) .. controls (3.31,0.3) and (6.95,1.4) .. (10.93,3.29)   ;
\draw    (322.79,118.2) -- (472.82,91.66) ;
\draw [shift={(474.79,91.32)}, rotate = 169.97] [color={rgb, 255:red, 0; green, 0; blue, 0 }  ][line width=0.75]    (10.93,-3.29) .. controls (6.95,-1.4) and (3.31,-0.3) .. (0,0) .. controls (3.31,0.3) and (6.95,1.4) .. (10.93,3.29)   ;
\draw    (321.81,91.32) -- (472.79,91.32) ;
\draw [shift={(474.79,91.32)}, rotate = 180] [color={rgb, 255:red, 0; green, 0; blue, 0 }  ][line width=0.75]    (10.93,-3.29) .. controls (6.95,-1.4) and (3.31,-0.3) .. (0,0) .. controls (3.31,0.3) and (6.95,1.4) .. (10.93,3.29)   ;
\draw    (324.77,250.81) -- (473.42,92.77) ;
\draw [shift={(474.79,91.32)}, rotate = 133.25] [color={rgb, 255:red, 0; green, 0; blue, 0 }  ][line width=0.75]    (10.93,-3.29) .. controls (6.95,-1.4) and (3.31,-0.3) .. (0,0) .. controls (3.31,0.3) and (6.95,1.4) .. (10.93,3.29)   ;
\draw   (301.08,80.56) -- (319.83,80.56) -- (319.83,259.77) -- (301.08,259.77) -- cycle ;
\draw  [dash pattern={on 0.84pt off 2.51pt}]  (310.95,200.63) -- (310.95,210.49) -- (309.96,233.78) ;
\draw    (322.79,171.96) -- (473.03,92.25) ;
\draw [shift={(474.79,91.32)}, rotate = 152.05] [color={rgb, 255:red, 0; green, 0; blue, 0 }  ][line width=0.75]    (10.93,-3.29) .. controls (6.95,-1.4) and (3.31,-0.3) .. (0,0) .. controls (3.31,0.3) and (6.95,1.4) .. (10.93,3.29)   ;
\draw    (322.79,171.96) -- (473.02,249.89) ;
\draw [shift={(474.79,250.81)}, rotate = 207.42] [color={rgb, 255:red, 0; green, 0; blue, 0 }  ][line width=0.75]    (10.93,-3.29) .. controls (6.95,-1.4) and (3.31,-0.3) .. (0,0) .. controls (3.31,0.3) and (6.95,1.4) .. (10.93,3.29)   ;
\draw  [fill={rgb, 255:red, 155; green, 155; blue, 155 }  ,fill opacity=0.23 ] (119.47,103.11) .. controls (119.47,79.36) and (138.72,60.1) .. (162.48,60.1) -- (402.17,60.1) .. controls (425.93,60.1) and (445.18,79.36) .. (445.18,103.11) -- (445.18,232.14) .. controls (445.18,255.89) and (425.93,275.15) .. (402.17,275.15) -- (162.48,275.15) .. controls (138.72,275.15) and (119.47,255.89) .. (119.47,232.14) -- cycle ;

\draw (165,104.4) node [anchor=north west][inner sep=0.75pt]  [font=\huge]  {$\frac{\sum _{i=1}^{k}X_{t}^{a_{i}}}{\sqrt{P}}$};
\draw (156,189) node [anchor=north west][inner sep=0.75pt]  [font=\large] [align=left] {{\large \ \ \ Hamming  }\\[-10pt] {\large \ \ \ weight $(V)$}\\[-10pt] {\large \  computation}};
\draw (28,32) node [anchor=north west][inner sep=0.75pt]  [font=\small] [align=left] {{\large Channel input}\\{\large  $\displaystyle \ \ \ \ \ \ \ \ \Tilde{\textbf{X}}$}};
\draw (220.53,30.96) node [anchor=north west][inner sep=0.75pt]  [font=\large]  {{Hamming weight}{ $V$}};
\draw (487.53,95.96) node [anchor=north west][inner sep=0.75pt]  [font=\LARGE]  {$0$};
\draw (488.53,225.39) node [anchor=north west][inner sep=0.75pt]  [font=\LARGE]  {$1$};
\draw (421,32) node [anchor=north west][inner sep=0.75pt]  [font=\large] [align=left] {{\large Channel Output}\\{\large $\displaystyle \ \ \ \ \ \ \ \ \ \ Z$}};
\draw (48.2,284.73) node [anchor=north west][inner sep=0.75pt]  [font=\large]  {$active\ user\ set:\ \{a_{1} ,a_{2} ,\dotsc ,a_{k} \}$};
\draw (306.92,168.09) node [anchor=north west][inner sep=0.75pt]    {$i$};
\draw (328.51,150.22) node [anchor=north west][inner sep=0.75pt]  [font=\large,rotate=-332.77]  {$p_{i}$};
\draw (329.93,176.77) node [anchor=north west][inner sep=0.75pt]  [font=\large,rotate=-29.2]  {$1-p_{i}$};
\draw (49.02,302.55) node [anchor=north west][inner sep=0.75pt]  [font=\large]  {$X^{a_{i}} \in \{0,\sqrt{P}\}$};
\draw (348.88,282.21) node [anchor=north west][inner sep=0.75pt]  [font=\Large]  {$p_{i} =1-e^{-\frac{\gamma}{i\sigma ^{2}P +\sigma _{w}^{2}}}$};
\draw (61.92,90.38) node [anchor=north west][inner sep=0.75pt]  [font=\large]  {$X^{a_{1}}$};
\draw (60.03,226.07) node [anchor=north west][inner sep=0.75pt]  [font=\large]  {$X^{a_{k}}$};
\draw (60.74,135.53) node [anchor=north west][inner sep=0.75pt]  [font=\large]  {$X^{a_{2}}$};
\draw (306.92,240.67) node [anchor=north west][inner sep=0.75pt]    {$k$};
\draw (305.94,110.75) node [anchor=north west][inner sep=0.75pt]    {$1$};
\draw (305.94,88.35) node [anchor=north west][inner sep=0.75pt]    {$0$};
\end{tikzpicture}
}	\setlength{\belowcaptionskip}{-38pt}
	\caption{Equivalent channel with only the active users of the $(\ell,k)-$MnAC as inputs.}
	\label{fig:redalpha}
	\end{figure}
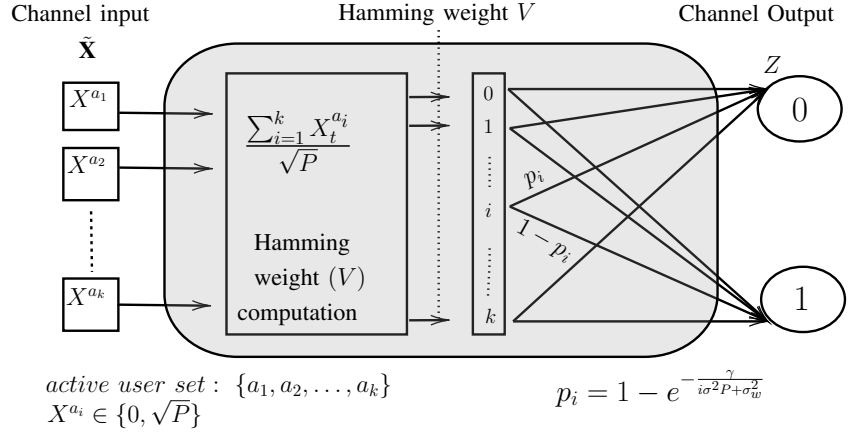

\subsection{ Maximum rate of the equivalent channel}
   
\begin{lemma}
For a  given fading statistics $\sigma^2$, noise variance $\sigma_w^2$, and non-coherent detector threshold $\gamma$ for the $(\ell,k)$-MnAC, the maximum rate of the equivalent point-to-point channel in Fig. \ref{fig:redalpha} is 
 \begin{equation} 
    C=\max_{(\gamma,q_{sp})}h\Big(E\Big[e^{-\frac{\gamma}{V  \sigma^{2}P+\sigma_{w}^{2}}}\Big]\Big)-E\Big[h\Big(e^{-\frac{\gamma}{V  \sigma^{2}P+\sigma_{w}^{2}}}\Big)\Big]  \label{jengap2}
\end{equation} where $E(\cdot)$ denotes expectation w.r.t  $ V$,  $h(x)=-x \log x-(1-x) \log (1-x)$ is the binary entropy function and $q_{sp}$ denotes the optimal sampling probability used for i.i.d preamble generation across all users.
 \label{lem1}
 \end{lemma}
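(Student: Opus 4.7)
The plan is to reduce the capacity of the equivalent vector-input channel of Fig.~\ref{fig:redalpha} to a scalar computation by exploiting the cascade structure $\tilde{\mathbf{X}} \to V \to Z$ already established in Section~III-A. Since $V$ is a deterministic function of $\tilde{\mathbf{X}}$ and $Z$ depends on $\tilde{\mathbf{X}}$ only through $V$, we have $H(Z|\tilde{\mathbf{X}}) = H(Z|\tilde{\mathbf{X}},V) = H(Z|V)$, and therefore $I(\tilde{\mathbf{X}};Z) = H(Z) - H(Z|\tilde{\mathbf{X}}) = H(Z) - H(Z|V) = I(V;Z)$. Maximizing the mutual information across the multi-user channel thus reduces to maximizing $I(V;Z)$ over those distributions on $V$ that are realizable from the allowed per-user Bernoulli preambles.

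The second step is a direct entropy computation. By (\ref{channeleq}), conditioned on $V = v$, $Z$ is Bernoulli with $\Pr(Z=1\mid V=v) = e^{-\gamma/(v\sigma^2 P + \sigma_w^2)}$, so
\begin{equation*}
H(Z|V) = E\!\left[h\!\left(e^{-\gamma/(V\sigma^2 P + \sigma_w^2)}\right)\right],
\end{equation*}
with the expectation taken over the distribution of $V$. Marginalizing yields $\Pr(Z=1) = E\!\left[e^{-\gamma/(V\sigma^2 P + \sigma_w^2)}\right]$, and hence $H(Z) = h\!\big(E[e^{-\gamma/(V\sigma^2 P + \sigma_w^2)}]\big)$. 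Subtracting these two expressions recovers exactly the objective inside the maximization in (\ref{jengap2}).

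It remains to justify restricting the outer maximization to a single common sampling probability $q_{sp}$ shared by all users, so that $V \sim \mathrm{Binomial}(k,q_{sp})$. Since the active set $\mathcal{A}$ is drawn uniformly at random from all size-$k$ subsets of $\mathcal{D}$ and is unknown at design time, the preamble design problem is invariant under permutations of the user indices. I would invoke a symmetrization argument: averaging $I(V;Z)$ over permutations of $(q_1,\dots,q_\ell)$ together with concavity of mutual information in the input distribution implies that one may set $q_i = q_{sp}$ for all $i$ without loss of optimality. I expect this symmetrization step to be the most delicate point, because the per-user Bernoulli product structure must be preserved under the averaging rather than merely the marginal of $V$. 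Once this reduction is in place, $V$ is binomial and the joint maximization over $(\gamma,q_{sp})$ gives the stated capacity formula.
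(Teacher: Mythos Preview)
Your proposal is correct and follows essentially the same route as the paper: both reduce $I(\tilde{\mathbf{X}};Z)$ to $I(V;Z)$ via the cascade structure $\tilde{\mathbf{X}}\to V\to Z$ with $V$ deterministic in $\tilde{\mathbf{X}}$, invoke symmetry and concavity of mutual information in the input law to equalize the per-user sampling probabilities to a common $q_{sp}$ (so $V\sim\mathrm{Bin}(k,q_{sp})$), and then compute $H(Z)$ and $H(Z\mid V)$ directly from (\ref{channeleq}). Your explicit flag that the symmetrization must respect the product-Bernoulli constraint is a fair caveat; the paper handles this step with the same one-line appeal to symmetry plus concavity and does not elaborate further.
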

\begin{proof} The equivalent point-to-point communication channel in Fig. \ref{fig:redalpha}  has two tunable parameters, viz, the sampling probability vector $\textbf{q} =\{q_{a_1} \ldots q_{a_k}\}$ at the user side and the non-coherent detector threshold $\gamma$ at the BS. Thus, the maximum rate of this equivalent channel between the binary vector input $\Tilde{\textbf{X}}$ and binary scalar output $Z$ is  
  \begin{equation}
    C=\max_{(\gamma,\textbf{q})} I(\Tilde{\textbf{X}};Z) \label{sumcap}
    \end{equation} where $\textbf{q} =\{q_{a_1} \ldots q_{a_k}\}$ are sampling probabilities of independent Bernoulli processes that users  $\{a_1 \ldots a_k\}$ employ for preamble generation, i.e., $X^{a_i}  \sim Bern(q_{a_i})$  such that  $X^{a_i}\ind X^{a_j}  $ $\forall i,j \in  \{1,\ldots k\}; i\neq j$. 
    
    Symmetry of the problem  with respect to the entries of $\Tilde{\textbf{X}}$ and concavity of $I(\Tilde{\textbf{X}};Z)$ as a function of $p(\Tilde{\textbf{x}})$  suggest that  for optimality we need $q_{a_1}=\ldots q_{a_k} =q_{sp}. $  Thus, $V$ follows Binomial distribution denoted by $V \sim  Bin(k,q_{sp}).$  Using (\ref{sumcap}), $\Tilde{\textbf{X}} \rightarrow V\rightarrow Z$ and that $V$ is a deterministic function of $\Tilde{\textbf{X}}$, we have
    \begin{equation}
     C =\max_{(\gamma,q_{sp} )} I(V;Z) \text{ where } V \sim Bin(k,q_{sp}). 
\label{eq20}
\end{equation} 
   We can write $H(Z)=h(q_s)$ where  $q_s$ is  \begin{equation}
    q_s= E\left( e^{-\frac{\gamma}{i  \sigma^{2}P+\sigma_{w}^{2}}}\right)=\sum_{i=0}^{k} {k\choose i} (1-q_{sp})^{k-i} (q_{sp})^{i} \times e^{-\frac{\gamma}{i  \sigma^{2}P+\sigma_{w}^{2}}}.
\end{equation}
 Also, we have 
\begin{equation*} \small
H(Z \mid V)=\sum_{i=0}^{k}{k\choose i}(1-q_{sp})^{k-i} (q_{sp})^{i} \times h\Big(e^{-\frac{\gamma}{i  \sigma^{2}P+\sigma_{w}^{2}}}\Big)\hspace{.5cm}
  \end{equation*}
  \begin{equation}
    =E\left[h\Big(e^{-\frac{\gamma}{V  \sigma^{2}P+\sigma_{w}^{2}}}\Big)\right]. \hspace{1.9cm} 
  \end{equation}
 \normalsize
Thus, jointly optimizing over $\gamma$ and $q_{sp}$, the maximum rate is given by 
\begin{equation*}
    C=\max_{(\gamma,q_{sp})}h\Big(E\Big[e^{-\frac{\gamma}{V  \sigma^{2}P+\sigma_{w}^{2}}}\Big]\Big)-E\Big[h\Big(e^{-\frac{\gamma}{V  \sigma^{2}P+\sigma_{w}^{2}}}\Big)\Big] ,
\end{equation*}  completing the proof.\end{proof}

Note that the minimum user identification  cost $n(\ell)$ for the non-coherent $(\ell,k)-$MnAC,  must obey $ n(\ell) \geq \frac{\log {\ell\choose k}}{C}$. However,  we cannot directly  claim that  $n(\ell)$  exactly matches the number of channel-uses required to send $\log{\ell\choose k}$ bits of information over the equivalent channel at maximum rate $C$. The reason is that the effective  codeword corresponding to a user set $\mathcal{A}$ is the subset of preambles $\{\textbf{X}^{i}: i \in \mathcal{A}\}$ indexed by the set $\mathcal{A}$.  Thus, the effective codewords of different user sets are potentially dependent due to  overlapping preambles. This is in contrast to the independent codeword assumptions typical to channel capacity analysis thereby rendering the traditional random coding approach ineffective.  We close this gap by making use of the equivalence of active device identification and GT to propose a decoding scheme achieving the minimum user identification cost as presented below.

\vspace{-0.3cm}\subsection{Minimum user identification cost}
\begin{theorem}
The minimum  user identification cost $n(\ell)$ of the non-coherent  $(\ell,k)-MnAC$ with $k=\Theta(1)$  is given by
\begin{equation*} \small
    n(\ell)= \frac{k\log(\ell)}{\max_{(\gamma,q_{sp})}h\Big(E\Big[e^{-\frac{\gamma}{V  \sigma^{2}P+\sigma_{w}^{2}}}\Big]\Big)-E\Big[h\Big(e^{-\frac{\gamma}{V  \sigma^{2}P+\sigma_{w}^{2}}}\Big)\Big]}
\end{equation*} \normalsize where $E(.)$ denotes the expectation w.r.t  $V \sim \operatorname{Bin}(k, q_{sp})$, $q_{sp}$ is the sampling probability for i.i.d preamble generation and $\gamma$ denotes the threshold for non-coherent energy detection.
\label{thm3}
\end{theorem}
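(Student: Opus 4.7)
The plan is to prove matching converse and achievability bounds. The converse is immediate from Lemma \ref{lem1} together with the discussion preceding the theorem: any reliable activity detection scheme must convey $\log\binom{\ell}{k}$ bits (the index of the active set) through the equivalent vector-input channel of Fig.~\ref{fig:redalpha}, whose maximum rate is the quantity $C$ in Lemma \ref{lem1}. A standard Fano-inequality argument on this cascade channel yields $n(\ell)\geq\log\binom{\ell}{k}/C$, and since $k=\Theta(1)$ we have $\log\binom{\ell}{k}=k\log\ell\,(1+o(1))$, which gives the converse matching the theorem.

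For achievability I cannot simply invoke channel coding, because, as the authors highlight, the effective codewords $\{\mathbf{X}^{i}:i\in\mathcal{A}\}$ associated with distinct active sets share columns wherever the sets overlap, so the usual independent random-coding construction does not apply. I would instead import the maximum-likelihood (or constant-composition typicality) decoder from noisy non-adaptive group testing in the sparse regime. Fix the capacity-achieving pair $(\gamma^*,q_{sp}^*)$ identified in Lemma \ref{lem1} and let every entry of every $\mathbf{X}^{i}$ be i.i.d.\ $\mathrm{Bern}(q_{sp}^*)$; the decoder outputs the $k$-subset of $\mathcal{D}$ maximizing the likelihood $p(\mathbf{Z}\mid\hat{\mathcal{A}})$, which by (\ref{channeleq}) factorizes across channel-uses as a product of $p_{v_t}$-type terms depending only on the column sums of the selected codewords.

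The error analysis partitions $\{\hat{\mathcal{A}}\neq\mathcal{A}\}$ by overlap: let $\mathcal{E}_i$ be the event that $|\hat{\mathcal{A}}\setminus\mathcal{A}|=|\mathcal{A}\setminus\hat{\mathcal{A}}|=i$ for $i\in\{1,\ldots,k\}$. Applying a Chernoff bound to each of the $\binom{k}{i}\binom{\ell-k}{i}\leq\ell^{i}$ candidate confusable sets produces $\Pr(\mathcal{E}_{i})\lesssim\exp\!\bigl(-n I_{i}+i\log\ell+O(1)\bigr)$, where $I_{i}:=I(X^{a_{k-i+1}},\ldots,X^{a_{k}};Z\mid X^{a_{1}},\ldots,X^{a_{k-i}})$. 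Using the Markov chain $\tilde{\mathbf{X}}\to V\to Z$ and the exchangeability of the i.i.d.\ Bernoulli columns, each $I_{i}$ reduces to a functional of $V\sim\mathrm{Bin}(k,q_{sp}^*)$ and the crossover probabilities $p_{v}$, with $I_{k}=I(\tilde{\mathbf{X}};Z)=C$ by Lemma \ref{lem1}. Choosing $n=(1+\epsilon)k\log\ell/C$ suffices, by a union bound over $i$ and over the $\binom{\ell}{i}$ candidate sets, to drive the total error probability to zero, provided $i\log\ell/I_{i}\leq k\log\ell/C$ for every $i<k$.

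The step I expect to be the main obstacle is verifying precisely this final inequality, i.e., that at the capacity-achieving input $(\gamma^*,q_{sp}^*)$ the binding rate constraint is $i=k$ rather than some intermediate $i$. Chain-rule sub-additivity of mutual information gives $I_{i}\leq iC/k$ only in the opposite direction of what is needed; the correct argument must exploit the bottleneck structure of the equivalent channel (inputs interact only through the Hamming weight $V$) to show that $i/I_{i}$ is non-decreasing in $i$, for instance by comparing $I_{i}$ to a suitably scaled Bernoulli-input mutual information and invoking data-processing through $V$. Once this monotonicity is in hand, the union-bound calculation closes the gap between $\log\binom{\ell}{k}/C$ and the achievable $(1+\epsilon)k\log\ell/C$, and letting $\epsilon\downarrow 0$ completes the proof.
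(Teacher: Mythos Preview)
Your proposal follows the same architecture as the paper: a Fano/capacity converse via Lemma~\ref{lem1}, and achievability via a group-testing ML-type decoder with the error event partitioned by the overlap deficit $s=|\hat{\mathcal{A}}\setminus\mathcal{A}|$, a union bound over the $\binom{k}{s}\binom{\ell-k}{s}$ competitors, and the resulting rate requirement $n\geq\max_{s}\frac{s\log\ell}{I_s}(1+o(1))$. The paper uses a threshold-test decoder (Algorithm~\ref{alg:MYALG}) from \cite{6157065} rather than exact ML, but the error analysis and the quantities $I_s$ that emerge are identical to yours.

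The one place where you hesitate is exactly where the paper has a one-line resolution, and your proposed route there is an overcomplication. You flag $I_i/i\geq I_k/k$ (equivalently, that the maximum over $s$ is attained at $s=k$) as the main obstacle and conjecture that the bottleneck $\tilde{\mathbf{X}}\to V\to Z$ must be exploited. No channel-specific structure is needed. The chain rule gives
\[
I_s=\sum_{j=k-s+1}^{k}a_j,\qquad a_j:=I\bigl(X^{a_j};Z\mid X^{a_1},\ldots,X^{a_{j-1}}\bigr)=h(q_{sp})-H\bigl(X^{a_j}\mid Z,X^{a_1},\ldots,X^{a_{j-1}}\bigr),
\]
and since the inputs are i.i.d., exchangeability together with ``conditioning reduces entropy'' makes the second term non-increasing in $j$, hence $a_j$ is non-decreasing. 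Therefore $I_s/s$, the average of the last $s$ increments, dominates $I_k/k$, the average of all $k$. Your remark that ``chain-rule sub-additivity gives $I_i\leq iC/k$'' is a misreading of which direction the decomposition delivers; the argument is about monotonicity of the chain-rule increments, not sub-additivity of the total, and it works for any channel with exchangeable inputs.
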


\begin{proof} Using standard capacity arguments \cite{10.5555/1146355}  along with Lemma $\ref{lem1}$, it is evident that user identification cost in the non-coherent $(\ell,k)$-MnAC must obey $ n(\ell) \geq \frac{\log {\ell\choose k}}{C}$ where $C$ is as given in Lemma \ref{lem1}. Thus, in the $k =\Theta(1)$ regime we have
\begin{equation} 
    n(\ell) \geq \frac{k\log(\ell)}{\max_{(\gamma,q_{sp})}h\Big(E\Big[e^{-\frac{\gamma}{V  \sigma^{2}P+\sigma_{w}^{2}}}\Big]\Big)-E\Big[h\Big(e^{-\frac{\gamma}{V  \sigma^{2}P+\sigma_{w}^{2}}}\Big)\Big]}.\label{equ}
\end{equation}
 
 Next, we present a decoder as shown in Algorithm \ref{alg:MYALG}  which can  achieve the equality  in $(\ref{equ})$ relying on classical thresholding techniques in information theory \cite{1057459}.  Due to the equivalence of active user identification and GT, we use an  adaptation of  the GT decoding strategy  in  \cite{6157065} to the context of user identification in  non-coherent $(\ell,k)$-MnAC.

\begin{algorithm}
\caption{Decoder achieving the minimum user identification cost in the $k=\Theta(1)$ regime. }\label{alg:cap}
\begin{algorithmic}[1]
\State \textbf{Given}: Set of preambles, $\mathcal{P} \triangleq \{\textbf{X}^i: i\in \mathcal{D}\}$;
Energy detector outputs: $\textbf{Z}\triangleq(Z_1,\ldots, Z_n)$; \hspace{.2cm} $k$: Number of active devices.
\State \textbf{Initialize}: Set of all size-$k$ subsets of $\mathcal{D}$ denoted by $\mathcal{S} =\{\mathcal{K}_1,\mathcal{K}_2,\ldots,\mathcal{K}_{{\ell \choose k}}\}$; Estimate of activity status vector denoted by $\hat{\boldsymbol{\beta}}=\mathbf{0}_{1 \times \ell};$  $\mathcal{A}_{est}=\Phi.$
\For{$i =1: \left|\mathcal{S}\right|$ }
\State $\hat{\mathcal{A}}= \mathcal{K}_i$.
\State Number of threshold tests passed, $n_t=0$.
\State \textbf{Threshold testing}:
\For{$j =1:k$}
\State \textbf{Partition: }$\Lambda^{\{j\}}=\left\{\left(\mathcal{I}^0, \mathcal{I}^1\right): \mathcal{I}^0 \bigcap \mathcal{I}^1=\emptyset, \mathcal{I}^0 \bigcup \mathcal{I}^1=\hat{\mathcal{A}},\left|\mathcal{I}^0\right|=j,\left|\mathcal{I}^1\right|=k-j\right\}$
\State \textbf{Set threshold values: }$\eta_{j}=\log _{2} \frac{\rho}{k{k \choose j}{\ell-k \choose j}}, \forall j \in \{1,2,\ldots,k\}$ for some $\rho >0$.
\If   {$\log_2 \left( \frac{ p\left(\textbf{z} \mid \textbf{x}({I_0}),\textbf{x}({I_1})\right)}{ p\left(\textbf{z} \mid \textbf{x}({I_1})\right)}\right) >\eta_{j}, \forall (I_0,I_1) \in \Lambda^{\{j\}}$}
\State $n_t=n_t+{k \choose j}$.
\EndIf
\EndFor
\If{$n_t =2^k-1$}
$\mathcal{A}_{est}=\mathcal{A}_{est} \bigcup \hat{\mathcal{A}}. $
\EndIf
\EndFor
\If{$\left|\mathcal{A}_{est}\right|=k$}
\State The estimated active device set is $\mathcal{A}_{est}$. 
 \State $\hat{\boldsymbol{\beta}}:\hat{{\beta_i}}=1, \forall i \in \mathcal{A}_{est}$.
\ElsIf {$\left|\mathcal{A}_{est}\right|=0 \text{ OR} \left|\mathcal{A}_{est}\right|>1$}
\State Declare \textbf{ error}.
\EndIf
\end{algorithmic}
\label{alg:MYALG}
\end{algorithm}

The algorithm operates as follows: Given the set of preambles $\mathcal{P}$ and the detector outputs $\textbf{Z}$, the decoder exhaustively searches for an approximate maximum likelihood (ML) size-$k$ set from  all possible ${\ell \choose k}$ subsets. The decoder declares $\mathcal{A}^*$ as the active  set iff
\begin{equation}
    p\left(\textbf{z} \mid \textbf{x}({\mathcal{A}^*})\right)>p\left(\textbf{z} \mid \textbf{x}({\hat{\mathcal{A}} })\right) ; \quad \forall \hat{\mathcal{A}}  \neq \mathcal{A}^*,
\end{equation}  where $\textbf{x}(\mathbf{\mathcal{A}})$ denotes $\{\textbf{x}^{a_i}:a_i\in \mathcal{A}\}$.  The corresponding activity status vector is  ${\boldsymbol{\beta}^*}:{{\beta_i}}=1, \forall i \in \mathcal{A}^*$. We perform this search  through a series of threshold tests for each candidate set $\hat{\mathcal{A}} \in \mathcal{S}$. For each $\hat{\mathcal{A}}$, we consider a partition $( \mathcal{I}^0, \mathcal{I}^1)$  of the candidate active set into its overlap and difference with the true active set as in step (6). i.e., $ \mathcal{I}^1:=\mathcal{A} \cap  \hat{\mathcal{A}} $ and $ \mathcal{I}^0:=\hat{\mathcal{A}} \setminus {\mathcal{A}} $. Then, we perform $2^k-1$ threshold tests as below. 
\begin{equation}
    \log_2 \left( \frac{ p\left(\textbf{z} \mid \textbf{x}({I_0}),\textbf{x}({I_1})\right)}{ p\left(\textbf{z} \mid \textbf{x}({I_1})\right)}\right) >\eta_{s}, \forall (I_0,I_1) \text{ partitions of } \hat{\mathcal{A}},\nonumber \label{deco}
\end{equation}
for predetermined thresholds $\eta_{s}$ where $s:=|I_0| \in \{1,2,\ldots,k\}$. Here, we  used $\textbf{x}(\mathbf{G})$ to denote $\{\textbf{x}^{a_i}:a_i\in \mathbf{G}\}$.

We declare $\hat{\mathcal{A}}$ as the  active user set if it is the only set passing all threshold tests. An error occurs if the active user set fails at least one of the tests in  (\ref{deco}) or an incorrect set passes all the tests.  
As shown in \cite{6157065}, using a set of thresholds given by $ \eta_{s}=\log \frac{\rho}{k{k \choose s}{\ell-k \choose s}}$ for some $\rho >0$,
 we can upper bound $\mathbb{P}_e^{(\ell)}$ as \begin{equation*} 
    \mathbb{P}_e^{(\ell)} \leq \sum_{s=1}^{k}{k \choose s} \mathbb{P}\Bigg( \log \left( \frac{ p\left(\textbf{z} \mid \textbf{x}( \mathcal{I}^0),\textbf{x}( \mathcal{I}^1)\right)}{ p\left(\textbf{z} \mid \textbf{x}( \mathcal{I}^1)\right)}\right)\leq \log  \frac{\rho}{k{k \choose s}{\ell-k \choose s}}\Bigg)+\rho. \label{nonasym}
\end{equation*}

Since  preamble design uses i.i.d Bernoulli random variables, using concentration inequalities \cite{8926588} for the $k=\Theta(1)$ regime, we can achieve $\mathbb{P}_e^{(\ell)}\rightarrow 0,$ with $\ell \rightarrow \infty$ as long as $n(\ell)$ obeys
\begin{equation}
     n(\ell) \geq \max _{s=1, \ldots, k} \frac{\log{\ell-k \choose s}}{I(\textbf{X}({ \mathcal{I}^0}_s);Z \mid \textbf{X}({ \mathcal{I}^1}_s) )}(1+o(1)).\label{teq}
\end{equation} Here, $\textbf{X}({ \mathcal{I}^0}_s):=\{\textbf{X}({ \mathcal{I}^0}):| \mathcal{I}^0|=s\}$ and $\textbf{X}({ \mathcal{I}^1}_s):=\{\textbf{X}({ \mathcal{I}^1}):| \mathcal{I}^0|=s\}$.
Using the asymptotic expression
$
\log {\ell -k \choose s}=s \log \ell+O(1)
$ in the $k=\Theta(1)$ regime implies 
\begin{equation}
     n(\ell) \geq \max _{s=1, \ldots, k} \frac{s\log \ell}{I(\textbf{X}({ \mathcal{I}^0}_s);Z \mid \textbf{X}({ \mathcal{I}^1}_s) )}(1+o(1)).\label{teq1}
\end{equation}
We can rewrite $I_s:=I(\textbf{X}({ \mathcal{I}^0}_s);Z \mid \textbf{X}({ \mathcal{I}^1}_s) )$   as 
\begin{equation} 
    I_s=\sum_{j=k-s+1}^{k}\Big(H\big(\textbf{X}(j)\big)-H\big(\textbf{X}(j) \mid Z, \mathbf{X}(1,2,..,j-1)\big)\Big) \label{con}.
\end{equation}Note that $\frac{I_s}{s}$ attains its minimum when $s=k$ since conditioning reduces entropy \cite{10.5555/1146355}. Thus, when $k=\Theta(1)$, (\ref{teq1}) simplifies to 

$ n(\ell) \geq \frac{k\log(\ell)}{I(\textbf{X};Z  )}(1+o(1)).$  Using the maximum rate of the equivalent channel $C$  in Lemma 1, we can conclude that,  our proposed decoder has $\mathbb{P}_e^{(\ell)} \rightarrow 0$ with $\ell \rightarrow \infty$ as long as 
 \begin{equation*}
    n(\ell) \geq \frac{k\log(\ell)}{\max_{(\gamma,q_{sp})}h\Big(E\Big[e^{-\frac{\gamma}{V  \sigma^{2}P+\sigma_{w}^{2}}}\Big]\Big)-E\Big[h\Big(e^{-\frac{\gamma}{V  \sigma^{2}P+\sigma_{w}^{2}}}\Big)\Big]}(1+o(1)),
\end{equation*} 
completing the proof.

\end{proof}
  
\noindent \textbf{Remark 1.} Note that  $\frac{k(1-\alpha)\log_2(\ell)}{C}$ where $C$ is as given in Lemma 1  is a valid lower bound for $k=\Theta(\ell^\alpha); 0\leq \alpha<1$. This is because user identification requires communicating  at least $\log_2{\ell \choose k}=k \log _{2} \frac{\ell}{k}+O(k)$ bits which corresponds to at least $\frac{k(1-\alpha)\log_2(\ell)}{C}$ number of channel-uses. As shown in Theorem 1, this lower bound is tight when $k = \Theta(1)$ or $\alpha = 0.$

\noindent \textbf{Remark 2.} The minimum user identification cost $n_0(\ell)$ for a Gaussian MnAC with average power constraint $P_{av}$ as described by  Chen \textit{et al}. in \cite{7852531} obeys   \begin{equation}
    n_{0}(\ell)= \frac{k\log(\ell)}{\frac{1}{2}\log (1+ kP_{av})} < n(\ell), \nonumber
\end{equation} where $n(\ell)$ is as given in Theorem 1. This is because $C$ in Lemma 1 is smaller than  $\frac{1}{2}\log (1+ kP_{av})$ where $P_{av}:= q_{sp}^*P$, with $ q_{sp}^*$ being  the optimal $ q_{sp}$ in Theorem 1. Clearly, the user identification in a Gaussian MnAC requires lesser number of channel uses due to the fact that  the model does not incorporate fading and is not constrained to OOK signaling and non-coherent detection as in our non-coherent $(\ell,k)-$MnAC. 
\section{Practical schemes for user identification}

The decoder achieving the minimum user identification cost presented in Algorithm \ref{alg:MYALG} is practically infeasible as its implementation requires exhaustive search over all candidate active user sets of size $k$. In this section, we present several  practical schemes for active user identification in a  non-coherent $(\ell,k)-$MnAC  by viewing it as a decoding problem in an equivalent point to point communication channel as illustrated in Section III.A. Our proposed schemes are primarily  based on N-COMP and BP algorithms, which although have been extensively studied  for compressed sensing problems \cite{6120391,4797638,661103}, were not used in the context of non-coherent $(\ell,k)-$MnAC setting. We are also interested in evaluating how close our proposed practical algorithms can come to
achieving the theoretical minimum user identification cost presented in Theorem 1.

In addition to exact recovery (Def. 2), we  consider the performance of our proposed strategies under a partial recovery criterion where instead of identifying all the $k$ active devices correctly, we are willing to tolerate some amount of misclassifications where an inactive  is declared as active and vice-versa. We formally define the partial recovery setting as follows. 

\begin{definition}
\textbf{ $\zeta \% -$ partial recovery}: For a true active set $\mathcal{A}$ and an estimated active set $\hat{\mathcal{A}}$,  consider an error event $E_1$  defined as 
\begin{equation}
    E_1 := \left\{\left|\hat{\mathcal{A}}^{\mathrm{c}} \cap \mathcal{A}\right| >  k\left(1-\frac{\zeta}{100}\right) \right\}. \label{errev}
\end{equation} We have $\zeta \%$-partial recovery, if \begin{equation}
    \mathbb{P}_{e,\zeta}^{(\ell)}:= P( E_1)  \rightarrow 0 \text { as } \ell \rightarrow \infty, \nonumber
\end{equation} 
where $\mathbb{P}_{e,\zeta}^{(\ell)}$ denotes the probability of error for $\zeta \% -$ partial recovery.  Probability  of successful identification for $\zeta \% -$  recovery is defined as $\mathbb{P}_{succ,\zeta}^{(\ell)}:= 1 -\mathbb{P}_{e,\zeta}^{(\ell)}.$

\end{definition}

The error event $E_1$ considers if the fraction of true active devices in $\mathcal{A}$  that are misdetected exceeds $\left(1-\frac{\zeta}{100}\right)$. Note that since our proposed strategies output a set of size $k$, both false positives and misdetections occur in equal numbers. Thus, the error event $E_1$ in (\ref{errev}) is essentially equivalent to $\left\{\left|\hat{\mathcal{A}}^{\mathrm{c}} \cap \mathcal{A}\right| > k\left(1-\frac{\zeta}{100}\right)  \bigcup \left|\hat{\mathcal{A}} \cap \mathcal{A}^c\right| >  k\left(1-\frac{\zeta}{100}\right)  \right\}$. Moreover, when $\zeta = 100$, the partial recovery setting boils down to our original setting of exact recovery given in Def. 2. The case when the number of active users $k$ is unknown at the BS will be discussed in Section IV.C.

   \subsection{N-COMP based user identification}

N-COMP is a decoding strategy for noisy GT proposed by Chan \textit{et al.} in \cite{6120391}. The basic idea is to declare an item  to be defective or non-defective based on the fraction of positive tests they participate in \cite{5394787}. This significantly simplifies the run-time and storage requirements in comparison to the optimal exhaustive search strategy in Algorithm 1.  As established in Section III, the active device identification framework in non-coherent $(\ell,k)-$MnAC can be viewed as a GT problem. This motivates us to study the performance of N-COMP approach in identifying active devices in the non-coherent $(\ell,k)-$MnAC. Note that in the existing literature, the N-COMP GT decoder was primarily analyzed  in the context of symmetric noise
models wherein errors in test results are equally likely \cite{6120373,6763117}. In contrast, the active device identification problem as modeled by the equivalent channel in Fig.3 has an asymmetric noise  and hence requires further investigations to validate its performance. Another work which considers N-COMP GT decoder is \cite{4797638} where the focus is on the problem of neighbor discovery in a wireless sensor network rather than the massive random access setting considered in this paper.

Let $\mathcal{G}_i:=\frac{\sum_{t=1}^{n}X^{i}_t}{\sqrt{P}}$ denote the Hamming weight of $\textbf{X}^{i}$, the binary preamble assigned to user $i$. In other words, $\mathcal{G}_i$ indicates the number of channel uses in which the user $i$ transmits `On' symbols if it is in active state. Let $\mathcal{R}_i:=\frac{\sum_{t=1}^{n}X^{i}_tZ_t}{\sqrt{P}}$ denote the number of these channel uses in which the received energy at the detector exceeds a predetermined threshold. In N-COMP based user identification, our strategy is to classify the $k$ users with highest value for $\frac{\mathcal{R}_i}{\mathcal{G}_i}$ as active.

We conducted  simulations of the N-COMP strategy to analyze the number of channel-uses required for identifying the active devices successfully. We considered two different success criterion, viz, exact recovery in which all active devices are to be correctly identified and 90$\%$ recovery ($\zeta =90)$ in which we require atleast 90$\%$  of the active devices to be correctly identified. In other words, with  90$\%$ recovery, we are willing to tolerate upto  10$\%$ of false positives and false negatives. Fig. 4 and Fig. 5 show how the probability of successful identification varies with the number of channel-uses for  $\ell=1000$ users with $k=25$. We considered different values for the SNR (defined as $\frac{P\sigma^2}{\sigma_w^2}$) ranging from 0 dB to 10 dB. As expected, with increase in SNR, N-COMP decoding performance gets better leading to a faster identification of active devices in the network. From Fig. 5, one can infer that relaxing the success criteria to 90$\%$ recovery improves the probability of successful identification significantly.

\begin{figure} 
\centering
\begin{minipage}{.5\textwidth}
	\centering
		\begin{tikzpicture}
  \sbox0{\includegraphics[width=.9\linewidth,height=65mm,trim={1.3cm 0.55cm 0 0},clip]{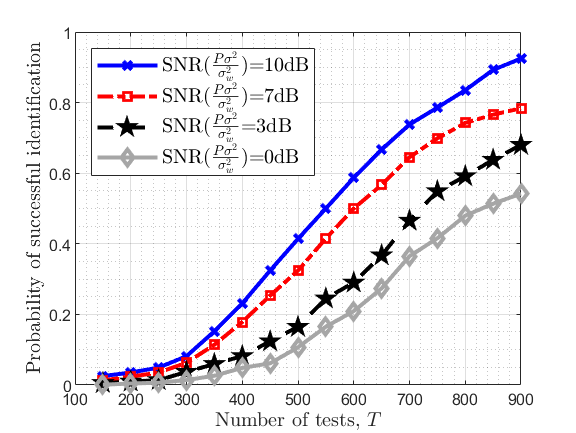}}
  \node[above right,inner sep=0pt] at (0,0)  {\usebox{0}};
  \node[black] at (0.5\wd0,-0.06\ht0) {\normalsize{Number of channel-uses, $n$}};
  \node[black,rotate=90] at (-0.04\wd0,0.5\ht0) {\normalsize{Probability of successful identification}};
\end{tikzpicture}
	\setlength{\belowcaptionskip}{-15pt}
	\caption{NCOMP: Exact recovery for $(1000,25)$-MnAC. }
	\label{fig:2usercap1}
	\end{minipage}%
	\begin{minipage}{.5\textwidth}
	\centering
		\begin{tikzpicture}
  \sbox0{\includegraphics[width=.9\linewidth,height=65mm,trim={1.3cm 0.55cm 0 0},clip]{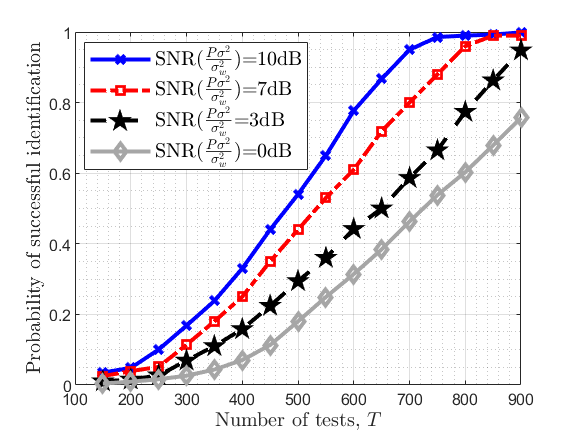}}
  \node[above right,inner sep=0pt] at (0,0)  {\usebox{0}};
  \node[black] at (0.5\wd0,-0.06\ht0) {\normalsize{Number of channel-uses, $n$}};
  \node[black,rotate=90] at (-0.04\wd0,0.5\ht0) {\normalsize{Probability of successful identification}};
\end{tikzpicture}
	\setlength{\belowcaptionskip}{-15pt}
	\caption{NCOMP: 90$\%$ recovery for $(1000,25)$-MnAC. }
	\label{fig:2usercap2}
	\end{minipage}
	\end{figure}
	\subsection{BP based user identification}
	
	BP is an efficient method  for solving sparse inference problems by iteratively propagating messages over the edges of the factor graph \cite{5169989}. BP is known to produce good approximations to  the marginals of the posterior distribution of the variables in the factor graph. In wireless communications, BP is well known in the context of iterative decoding for Turbo and LDPC codes \cite{661103}. BP is also used in GT decoding, which is essentially a  Boolean sparse inference problem \cite{5707018, hara2022sparse}. In \cite{5707018}, the authors considered  BP based GT decoding under a noise model that combines the addition and
dilution models. In the context of mMTC, BP has been previously studied as a means for unsourced random access the users share
the same codebook and
the decoder is only required to provide an unordered list of
user messages \cite{9654225}.
	
	Recognizing the suitability of BP in sparse recovery problems including GT, we aim to employ a BP based strategy for the problem of active device identification in the non-coherent $(\ell,k)-$MnAC. In BP based user identification, we  estimate the set of active devices by  approximately performing  bitwise-MAP detection through message passing on a bipartite graph with devices on one side and the binary energy measurements on the other side \cite{5707018}.    In contrast to the previous studies in \cite{5707018}, our noise model as given in $(\ref{channeleq})$ incorporating channel fading  effects and non-coherent detection at the BS side is more intricate. Note that   in \cite{8849288},  Polyanskiy  proposed an alternating BP scheme as a candidate practical solution for random access in a  quasi-static fading multiple access channel where each user’s transmissions are attenuated by random fading gains, unknown to the receiver. However, \cite{8849288} considers  iterative coding scheme based on LDPC codes and slow fading scenarios whereas our work focus on energy-efficient OOK  signaling in a fast fading non-coherent $(\ell,k)-$MnAC.

	To identify the activity status vector $\boldsymbol{\beta}$ using the binary vector of energy detector outputs $\textbf{Z}=(Z_1,\ldots, Z_n)$, BP relies on an approximate Maximum Aposteriori Probability (MAP) based decoding where instead of  maximizing the posterior as a function of all the $\beta_i$'s, we use the marginals of the posterior distribution. i.e., $\widehat{\beta_i}=\underset{\beta_i \in\{0,1\}}{\arg \max }\hspace{2pt} \mathbb{P}\left(\beta_i \mid \mathbf{Z}\right)$. This approximation drastically reduces the search space corresponding to the original MAP optimization. $\widehat{\beta_i}$ can be rewritten as
\begin{equation}
 \hat{\beta}_i= \underset{\beta_i \in\{0,1\}}{\arg\max } \sum_{\sim \beta_i}\left[\prod_{t=1}^n p\left(Z_t \mid \boldsymbol{\beta}\right) \prod_{j=1}^\ell p\left(\beta_j\right)\right].
 \label{bpexp} 
\end{equation} Eqn. (\ref{bpexp}) can be approximately computed using loopy BP based on a bipartite graph with $\ell$ devices at one side and the $n$ energy detector  outputs at the other side. The priors $p\left(\beta_j\right)$'s are initialized to $\frac{k}{\ell}$ and updated through message passing in further iterations.
Note that \begin{equation}
    p(Z_t \mid \boldsymbol{\beta})=p\left(Z_t \mid V \right),
\end{equation} since $Z_t$ depends on $\boldsymbol{\beta}$ only through $V$ (defined in Section III.A), the number of active devices transmitting `On' symbols during the $t^{th}$ channel-use. This symmetry as illustrated in our equivalent channel model in Fig.3 is key in  guaranteeing an efficient computation of BP message update rules \cite{5707018}. We employ  BP with Soft Thresholding (BP-ST) in which we classify the $k$ users with the highest marginals as active. Similar to the NCOMP strategy, we observed that BP-ST decoding performance shows improvement with SNR (plots are omitted for brevity).

\begin{figure}   
	\centering
\begin{minipage}{.5\textwidth}
	\begin{tikzpicture}
  \sbox0{\includegraphics[width=.9\linewidth,height=65mm,trim={1.4cm 0.8cm 0 0},clip]{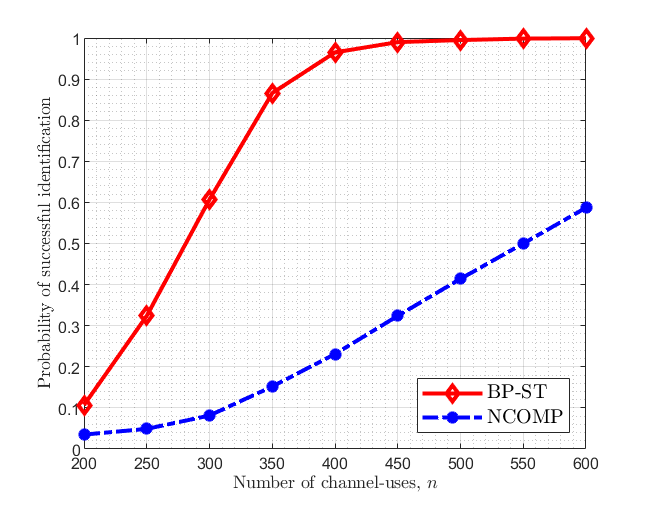}}
  \node[above right,inner sep=0pt] at (0,0)  {\usebox{0}};
  \node[black] at (0.5\wd0,-0.06\ht0) {\normalsize{Number of channel-uses, $n$}};
  \node[black,rotate=90] at (-0.04\wd0,0.5\ht0) {\small{Probability of successful identification}};
\end{tikzpicture}
	\setlength{\belowcaptionskip}{-15pt} 
	\caption{\footnotesize{Exact recovery: $(1000,25)$-MnAC at SNR = 10 dB.}}
	\label{fig:z5}
	\end{minipage}%
\begin{minipage}{.5\textwidth}
	\centering
	\begin{tikzpicture}
  \sbox0{\includegraphics[width=.9\linewidth,height=65mm,trim={1.4cm 0.8cm 0 0},clip]{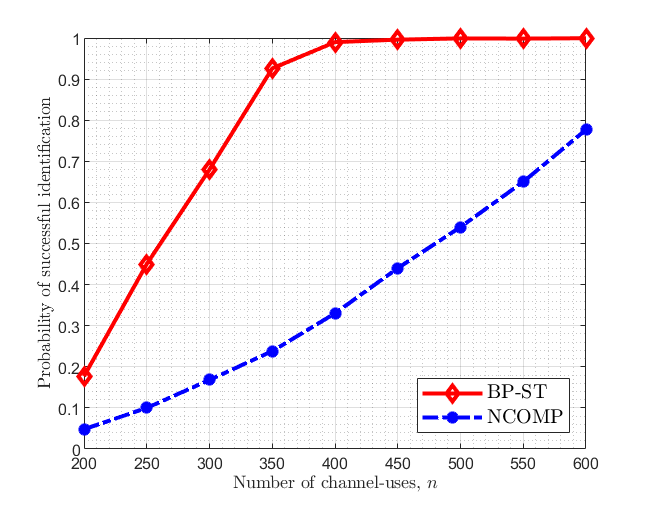}}
  \node[above right,inner sep=0pt] at (0,0)  {\usebox{0}};
  \node[black] at (0.5\wd0,-0.06\ht0) {\normalsize{Number of channel-uses, $n$}};
  \node[black,rotate=90] at (-0.04\wd0,0.5\ht0) {\small{Probability of successful identification}};
\end{tikzpicture}
\setlength{\belowcaptionskip}{-15pt} 
	\caption{\footnotesize{$90\%$ recovery: $(1000,25)$-MnAC at SNR = 10 dB.} }
	\label{fig:z6}
	\end{minipage}
	\end{figure}
In Fig. \ref{fig:z5} and Fig. \ref{fig:z6}, we compare the  probability of successful identification in a  $(1000,25)$-MnAC at SNR = 10 dB for BP-ST and NCOMP strategies under the two success  criterion, viz, exact and $90\%$ recovery. As evident, BP shows significant  performance gains over NCOMP which translates to faster identification of active devices. Moreover, the curves indicate that if we are willing to relax the success criterion by setting $\zeta$ to a lower value like $90\%$, both strategies tend to perform better while maintaining the relative order in terms of probability of successful identification. Although N-COMP significantly  lags behind BP, owing to its simplicity and non-iterative nature, it still can be used in applications where run-time and hardware complexity is a  critical factor.

\begin{figure}   
	\centering
	\begin{tikzpicture}
  \sbox0{\includegraphics[width=.55\linewidth,height=70mm,trim={1.1cm 0.8cm 0 0},clip]{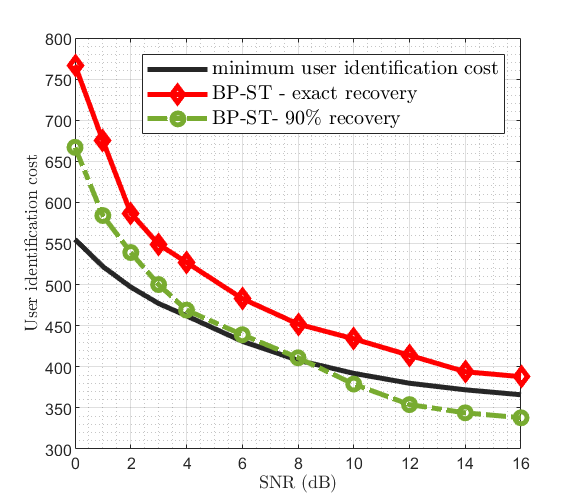}}
  \node[above right,inner sep=0pt] at (0,0)  {\usebox{0}};
  \node[black] at (0.5\wd0,-0.06\ht0) {\small{SNR}};
  \node[black,rotate=90] at (-0.04\wd0,0.5\ht0) {\small{Number of channel-uses $n$}};
\end{tikzpicture}
	\setlength{\belowcaptionskip}{-25pt} 
	\caption{ Minimum user identification cost vs BP-ST for $(1000,25)$-MnAC. }
	\label{fig:z4}
	\end{figure}

In Fig. \ref{fig:z4}, we compare the minimum user identification cost in Theorem 1 with the performance of BP-ST. Evidently, the gap between the BP-ST curve and the minimum user identification cost reduces notably in the high SNR regime. We also observe that  BP can approach and even exceed the minimum user identification cost if we relax the success criterion slightly to $90\%$ recovery by allowing room for a small number of  misdetections.

\subsection{BP-based user identification when $k$ is unknown}	
Since simulation results indicate that BP is more efficient compared to NCOMP, we restrict our focus to BP in this section. To take into account the uncertainty associated with the size $\hat{k}$ of the estimated active set $\hat{\mathcal{A}}$, we consider  an alternate definition for  the error event different from  (\ref{errev}) as below. 

\begin{definition}
\textbf{ $\zeta \% -$ Partial recovery with unknown k}: For a true active set $\mathcal{A}$ and an estimated active set $\hat{\mathcal{A}}$,  consider an error event $E_2$  defined as 
\begin{equation}
    E_2:= \left\{k(1-\sigma)\leq \hat{k} \leq k(1+\sigma)\right\}^{\mathrm{c}} \bigcup E_1, 
    \label{errev1}
    \end{equation} where $E_1$ is as given in $(\ref{errev})$. We have $\zeta \%$-partial recovery with unknown $k$, if \begin{equation}
    \mathbb{P}_{e,\zeta}^{(\ell)}:= P( E_2)  \rightarrow 0 \text { as } \ell \rightarrow \infty, \nonumber
\end{equation} 
where $\mathbb{P}_{e,\zeta}^{(\ell)}$ denotes the probability of error for $\zeta \% -$ partial recovery.  Probability  of successful identification for $\zeta \% -$  recovery is defined as $\mathbb{P}_{succ,\zeta}^{(\ell)}:= 1 -\mathbb{P}_{e,\zeta}^{(\ell)}.$

\end{definition}
Note that $\mathbb{P}(E_2) \geq \mathbb{P}(E_1)$ due to  the first term $\left\{k(1-\sigma)\leq \hat{k} \leq k(1+\sigma)\right\}^{\mathrm{c}}$ in (\ref{errev1}). It implies that in addition to $E_1$, if the estimated active set has a size deviating from the size of the true active set by more than a fraction of $\sigma$, we declare an error. In the case when $k$ is known \textit{apriori}, we can set $\sigma = 0$ and (\ref{errev1}) simplifies to (\ref{errev}). 

    We consider the performance of several loopy BP based strategies for non-coherent active user identification when $k$ is unknown as follows: 
\begin{itemize}
    \item BP with Symmetric Hard Thresholding (BP-SHT) : Here, we classify user $i$ as active if the estimate of $\mathbb{P}\left(\beta_i=\right.$ $1 \mid \mathbf{Z})$ exceeds $\frac{1}{2}$.
    \item
    BP with Asymmetric Hard Thresholding (BP-AHT) : Here,
    we threshold  the  posterior probability $\mathbb{P}\left(\beta_i=\right.$ $1 \mid \mathbf{Z})$ at a  value $\eta  \neq \frac{1}{2}$ allowing more flexibility in controlling false positives and misdetections.

\end{itemize}

In our simulations, we set $\sigma = 0.1$ allowing upto $10\%$ deviation in the value of $\hat{k}$ w.r.t $k$. Firstly, similar to NCOMP and  BP-ST,  we observed that the performance of BP-SHT and BP-AHT also show an increasing trend with SNR (plots omitted for brevity). Hence, here onwards, we focus on a particular SNR value (10 dB) for convenience. 

In Fig. \ref{fig:q4}, we compare the exact recovery ($\zeta = 100$) performance of different BP strategies in terms of how the probability of successful identification varies with the number of channel-uses at an SNR of $10$ dB. Clearly, BP-ST demonstrates superior performance in comparison to BP-SHT and BP-AHT due  to the fact that BP-ST has the knowledge of $k$ at its disposal which enables better  decoding. The corresponding comparison curves for the  $90\%$ recovery case  ($\zeta = 90$) is shown in Fig. \ref{fig:q5}. Similar to the exact recovery case, BP-ST performs better than the other two strategies. However, the gap between the different strategies is less pronounced for  $90\%$ recovery  implying that the lack of knowledge of $k$ is less crucial in partial-recovery scenarios. 

\begin{figure}   
	\centering
	\begin{tikzpicture}
  \sbox0{\includegraphics[width=.55\linewidth,height=70mm,trim={1.4cm 0.8cm 0 0},clip]{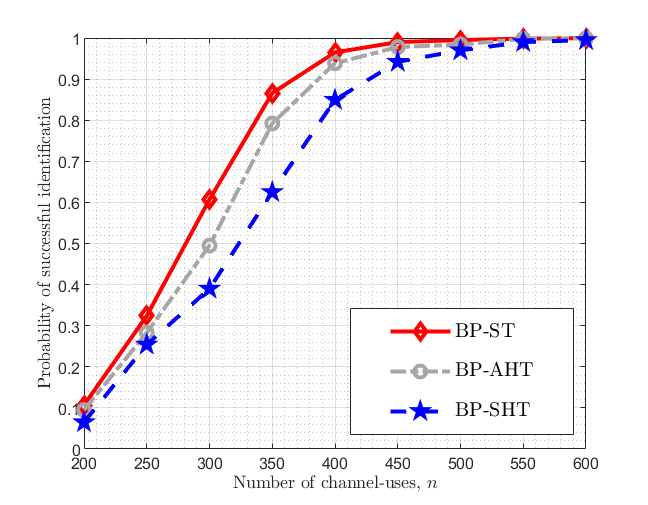}}
  \node[above right,inner sep=0pt] at (0,0)  {\usebox{0}};
  \node[black] at (0.5\wd0,-0.06\ht0) {\small{\small{Number of channel-uses $n$}}};
  \node[black,rotate=90] at (-0.04\wd0,0.5\ht0) {\small{Probability of successful identification}};
\end{tikzpicture}
	\setlength{\belowcaptionskip}{-25pt} 
	\caption{ Comparison of various BP algorithms for exact recovery in $(1000,25)$-MnAC at SNR $=$ 10 dB. }
	\label{fig:q4}
	\end{figure}
\begin{figure}   
	\centering
	\begin{tikzpicture}
  \sbox0{\includegraphics[width=.55\linewidth,height=70mm,trim={1.4cm 0.8cm 0 0},clip]{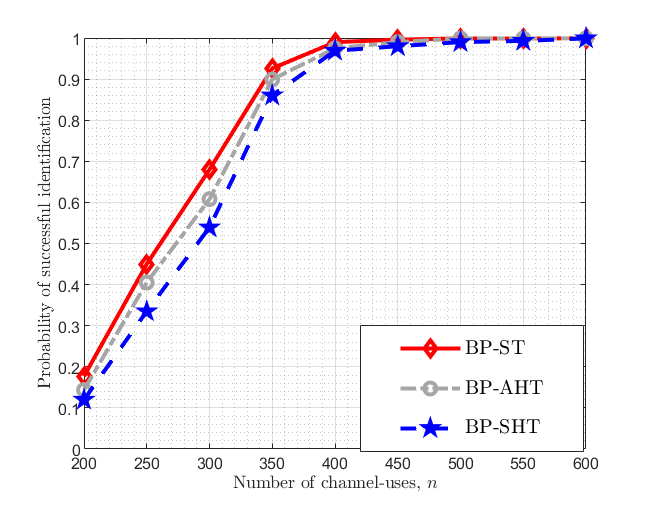}}
  \node[above right,inner sep=0pt] at (0,0)  {\usebox{0}};
  \node[black] at (0.5\wd0,-0.06\ht0) {\small{\small{Number of channel-uses $n$}}};
  \node[black,rotate=90] at (-0.04\wd0,0.5\ht0) {\small{Probability of successful identification}};
\end{tikzpicture}
	\setlength{\belowcaptionskip}{-25pt} 
	\caption{ Comparison of various BP algorithms for $90\%$ recovery in $(1000,25)$-MnAC at  SNR $=$ 10 dB. }
	\label{fig:q5}
	\end{figure}

\section{Conclusion}
	In this paper, we have proposed an active device identification framework for non-coherent $(\ell,k)-$MnAC  in which, during each channel-use the active devices engage in joint on-off preamble transmission  and base station measures the received energy using a threshold-based  binary energy detector. We have demonstrated that the active device identification problem in non-coherent $(\ell,k)-$MnAC can be modeled as a decoding problem in a constrained point-to-point communication channel. Exploiting this equivalence, we have evaluated the performance of our proposed scheme  in terms of the minimum user identification cost - the  minimum number of channel-uses required for reliable active user identification. Specifically, we have established a lower bound for the  minimum user identification cost in the $k=\Theta(\ell^{\alpha})$ which is tight in the $k=\Theta(1)$ regime. We have also illustrated an achievability scheme for $k =\Theta(1)$  based on an approximate ML decoding approach. Considering the computational complexity of this achievability scheme, we have proposed several practical strategies for device identification based on NCOMP and  BP strategies. We have also considered various scenarios in our simulations including partial recovery of active devices and unknown number of active devices. Through several numerical simulations, we have established that BP based strategies can outperform NCOMP strategies significantly in terms of the user identification cost. In fact, it was noted that if we are willing to tolerate a few number of misdetections, BP performance can even surpass the minimum user identification cost.

\bibliographystyle{IEEEtranTCOM}

\bibliography{IEEE_TCOM}

\begin{thebibliography}{10}
\baselineskip 12pt
\providecommand{\url}[1]{#1}
\csname url@samestyle\endcsname
\providecommand{\newblock}{\relax}
\providecommand{\bibinfo}[2]{#2}
\providecommand{\BIBentrySTDinterwordspacing}{\spaceskip=0pt\relax}
\providecommand{\BIBentryALTinterwordstretchfactor}{4}
\providecommand{\BIBentryALTinterwordspacing}{\spaceskip=\fontdimen2\font plus
\BIBentryALTinterwordstretchfactor\fontdimen3\font minus
  \fontdimen4\font\relax}
\providecommand{\BIBforeignlanguage}[2]{{%
\expandafter\ifx\csname l@#1\endcsname\relax
\typeout{** WARNING: IEEEtran.bst: No hyphenation pattern has been}%
\typeout{** loaded for the language `#1'. Using the pattern for}%
\typeout{** the default language instead.}%
\else
\language=\csname l@#1\endcsname
\fi
#2}}
\providecommand{\BIBdecl}{\relax}
\BIBdecl

\bibitem{7736615}
Z.~Dawy, W.~Saad, A.~Ghosh, J.~G. Andrews, and E.~Yaacoub, ``Toward massive
  machine type cellular communications,'' \emph{IEEE Wireless Communications},
  vol.~24, no.~1, pp. 120--128, 2017.

\bibitem{7263368}
A.~Ali, W.~Hamouda, and M.~Uysal, ``Next generation {M2M} cellular networks:
  challenges and practical considerations,'' \emph{IEEE Communications
  Magazine}, vol.~53, no.~9, pp. 18--24, 2015.

\bibitem{9023459}
G.~Gui, M.~Liu, F.~Tang, N.~Kato, and F.~Adachi, ``{6G}: Opening new horizons
  for integration of comfort, security, and intelligence,'' \emph{IEEE Wireless
  Communications}, vol.~27, no.~5, pp. 126--132, 2020.

\bibitem{5937958}
A.~Mezghani and J.~A. Nossek, ``Power efficiency in communication systems from
  a circuit perspective,'' in \emph{2011 IEEE International Symposium of
  Circuits and Systems (ISCAS)}, 2011, pp. 1896--1899.

\bibitem{8534548}
D.~Sundman, M.~M. Lopez, and L.~R. Wilhelmsson, ``Partial on-off keying - a
  simple means to further improve iot performance,'' in \emph{2018 Global
  Internet of Things Summit (GIoTS)}, 2018, pp. 1--5.

\bibitem{9596585}
X.~Guo, L.~Shangguan, Y.~He, J.~Zhang, H.~Jiang, A.~A. Siddiqi, and Y.~Liu,
  ``Efficient ambient lora backscatter with on-off keying modulation,''
  \emph{IEEE/ACM Transactions on Networking}, vol.~30, no.~2, pp. 641--654,
  2022.

\bibitem{6120373}
P.~Zhang, F.~M.~J. Willems, and L.~Huang, ``Investigations of noncoherent {OOK}
  based schemes with soft and hard decisions for {WSN}s,'' in \emph{2011 49th
  Annual Allerton Conference on Communication, Control, and Computing
  (Allerton)}, 2011, pp. 1702--1709.

\bibitem{8386824}
W.~Zhou, Y.~Jia, A.~Peng, Y.~Zhang, and P.~Liu, ``The effect of iot new
  features on security and privacy: New threats, existing solutions, and
  challenges yet to be solved,'' \emph{IEEE Internet of Things Journal},
  vol.~6, no.~2, pp. 1606--1616, 2019.

\bibitem{mendez2018internet}
D.~Mendez~Mena, I.~Papapanagiotou, and B.~Yang, ``Internet of things: Survey on
  security,'' \emph{Information Security Journal: A Global Perspective},
  vol.~27, no.~3, pp. 162--182, 2018.

\bibitem{7036828}
G.-Y. Chang, J.-F. Huang, and Z.-H. Wu, ``A frequency hopping algorithm against
  jamming attacks under asynchronous environments,'' in \emph{2014 IEEE Global
  Communications Conference}, 2014, pp. 324--329.

\bibitem{fi11010016}
\BIBentryALTinterwordspacing
L.~Oliveira, J.~J. P.~C. Rodrigues, S.~A. Kozlov, R.~A.~L. Rabêlo, and V.~H.
  C.~d. Albuquerque, ``Mac layer protocols for internet of things: A survey,''
  \emph{Future Internet}, vol.~11, no.~1, 2019. [Online]. Available:
  \url{https://www.mdpi.com/1999-5903/11/1/16}
\BIBentrySTDinterwordspacing

\bibitem{4394775}
A.~Mpitziopoulos, D.~Gavalas, G.~Pantziou, and C.~Konstantopoulos, ``Defending
  wireless sensor networks from jamming attacks,'' in \emph{2007 IEEE 18th
  International Symposium on Personal, Indoor and Mobile Radio Communications},
  2007, pp. 1--5.

\bibitem{9136922}
M.~Letafati, A.~Kuhestani, H.~Behroozi, and D.~W.~K. Ng, ``Jamming-resilient
  frequency hopping-aided secure communication for internet-of-things in the
  presence of an untrusted relay,'' \emph{IEEE Transactions on Wireless
  Communications}, vol.~19, no.~10, pp. 6771--6785, 2020.

\bibitem{8403656}
O.~Y. Bursalioglu, Z.~Li, C.~Wang, and H.~Papadopoulos, ``Efficient {C-RAN}
  random access for {IoT} devices: Learning links via recommendation systems,''
  in \emph{2018 IEEE International Conference on Communications Workshops (ICC
  Workshops)}, 2018, pp. 1--6.

\bibitem{TS36.213}
``Physical layer procedures-{TS} 36.213,'' 3rd Generation Partnership Project
  (3GPP), Tech. Rep., Sept. 2017.

\bibitem{9266124}
X.~Shao, X.~Chen, D.~W.~K. Ng, C.~Zhong, and Z.~Zhang, ``Cooperative activity
  detection: Sourced and unsourced massive random access paradigms,''
  \emph{IEEE Transactions on Signal Processing}, vol.~68, pp. 6578--6593, 2020.

\bibitem{ALTURJMAN2017299}
F.~Al-Turjman, ``Price-based data delivery framework for dynamic and pervasive
  {IoT},'' \emph{Pervasive and Mobile Computing}, vol.~42, pp. 299--316, 2017.

\bibitem{9060999}
Y.~{Wu}, X.~{Gao}, S.~{Zhou}, W.~{Yang}, Y.~{Polyanskiy}, and G.~{Caire},
  ``Massive access for future wireless communication systems,'' \emph{IEEE
  Wireless Communications}, vol.~27, no.~4, pp. 148--156, 2020.

\bibitem{9537931}
J.~Choi, J.~Ding, N.-P. Le, and Z.~Ding, ``{G}rant-{F}ree random access in
  machine-type communication: Approaches and challenges,'' \emph{IEEE Wireless
  Communications}, pp. 1--8, 2021.

\bibitem{8734871}
Z.~Chen, F.~Sohrabi, and W.~Yu, ``Multi-cell sparse activity detection for
  massive random access: Massive {MIMO} versus cooperative {MIMO},'' \emph{IEEE
  Transactions on Wireless Communications}, vol.~18, no.~8, 2019.

\bibitem{7952810}
Z.~Chen and W.~Yu, ``Massive device activity detection by approximate message
  passing,'' in \emph{2017 IEEE International Conference on Acoustics, Speech
  and Signal Processing (ICASSP)}, 2017, pp. 3514--3518.

\bibitem{8454392}
L.~Liu, E.~G. Larsson, W.~Yu, P.~Popovski, C.~Stefanovic, and E.~de~Carvalho,
  ``Sparse signal processing for grant-free massive connectivity: A future
  paradigm for random access protocols in the internet of things,'' \emph{IEEE
  Signal Processing Magazine}, vol.~35, no.~5, pp. 88--99, 2018.

\bibitem{5695122}
M.~Bayati and A.~Montanari, ``The dynamics of message passing on dense graphs,
  with applications to compressed sensing,'' \emph{IEEE Transactions on
  Information Theory}, vol.~57, no.~2, pp. 764--785, 2011.

\bibitem{7282735}
X.~Li, S.~Pawar, and K.~Ramchandran, ``Sub-linear time compressed sensing using
  sparse-graph codes,'' in \emph{2015 IEEE International Symposium on
  Information Theory (ISIT)}, 2015, pp. 1645--1649.

\bibitem{8444464}
K.~Senel and E.~G. Larsson, ``Grant-free massive {MTC}-enabled massive mimo: A
  compressive sensing approach,'' \emph{IEEE Transactions on Communications},
  vol.~66, no.~12, pp. 6164--6175, 2018.

\bibitem{8262800}
H.~A. {Inan}, P.~{Kairouz}, and A.~{Ozgur}, ``Sparse group testing codes for
  low-energy massive random access,'' in \emph{2017 55th Annual Allerton
  Conference on Communication, Control, and Computing (Allerton)}, 2017, pp.
  658--665.

\bibitem{9500808}
J.~Robin and E.~Erkip, ``Sparse activity discovery in energy constrained
  multi-cluster {IoT} networks using group testing,'' in \emph{ICC 2021 - IEEE
  International Conference on Communications}, 2021, pp. 1--6.

\bibitem{9593173}
------, ``Access delay constrained activity detection in massive random
  access,'' in \emph{2021 IEEE 22nd International Workshop on Signal Processing
  Advances in Wireless Communications (SPAWC)}, 2021, pp. 191--195.

\bibitem{7852531}
X.~Chen, T.-Y. Chen, and D.~Guo, ``Capacity of {G}aussian many-access
  channels,'' \emph{IEEE Transactions on Information Theory}, vol.~63, no.~6,
  pp. 3516--3539, 2017.

\bibitem{6691257}
X.~Chen and D.~Guo, ``{G}aussian many-access channels: definition and symmetric
  capacity,'' in \emph{2013 IEEE Information Theory Workshop (ITW)}, 2013, pp.
  1--5.

\bibitem{8849288}
S.~S. Kowshik, K.~Andreev, A.~Frolov, and Y.~Polyanskiy, ``Energy efficient
  random access for the quasi-static fading {MAC},'' in \emph{2019 IEEE
  International Symposium on Information Theory (ISIT)}, 2019, pp. 2768--2772.

\bibitem{923716}
I.~Abou-Faycal, M.~Trott, and S.~Shamai, ``The capacity of discrete-time
  memoryless {R}ayleigh-fading channels,'' \emph{IEEE Transactions on
  Information Theory}, vol.~47, no.~4, pp. 1290--1301, 2001.

\bibitem{5174497}
K.~Witrisal, G.~Leus, G.~J. Janssen, M.~Pausini, F.~Troesch, T.~Zasowski, and
  J.~Romme, ``Noncoherent ultra-wideband systems,'' \emph{IEEE Signal
  Processing Magazine}, vol.~26, no.~4, pp. 48--66, 2009.

\bibitem{7514754}
L.~Jing, E.~De~Carvalho, P.~Popovski, and O.~Martínez, ``Design and
  performance analysis of noncoherent detection systems with massive receiver
  arrays,'' \emph{IEEE Transactions on Signal Processing}, vol.~64, no.~19, pp.
  5000--5010, 2016.

\bibitem{8926588}
\BIBentryALTinterwordspacing
M.~{Aldridge}, O.~{Johnson}, and J.~{Scarlett}, \emph{Group Testing: An
  Information Theory Perspective}, 2019. [Online]. Available:
  \url{https://ieeexplore.ieee.org/document/8926588}
\BIBentrySTDinterwordspacing

\bibitem{1096146}
T.~Berger, N.~Mehravari, D.~Towsley, and J.~Wolf, ``Random multiple-access
  communication and group testing,'' \emph{IEEE Transactions on
  Communications}, vol.~32, no.~7, pp. 769--779, 1984.

\bibitem{8945}
D.~Kurtz and M.~Sidi, ``Multiple access algorithms via group testing for
  heterogeneous population of users,'' \emph{IEEE Transactions on
  Communications}, vol.~36, no.~12, pp. 1316--1323, 1988.

\bibitem{1057026}
J.~Wolf, ``Born again group testing: Multiaccess communications,'' \emph{IEEE
  Transactions on Information Theory}, vol.~31, no.~2, pp. 185--191, 1985.

\bibitem{8849823}
H.~A. Inan, S.~Ahn, P.~Kairouz, and A.~Ozgur, ``A group testing approach to
  random access for short-packet communication,'' in \emph{2019 IEEE
  International Symposium on Information Theory (ISIT)}, 2019, pp. 96--100.

\bibitem{article}
A.~Dyachkov and V.~Rykov, ``Survey of superimposed code theory.''
  \emph{Problems of Control and Information Theory}, vol.~12, pp. 229--242, 01
  1983.

\bibitem{9517965}
J.~Robin and E.~Erkip, ``Capacity bounds and user identification costs in
  {R}ayleigh-fading many-access channel,'' in \emph{2021 IEEE International
  Symposium on Information Theory (ISIT)}, 2021, pp. 2477--2482.

\bibitem{6120391}
C.~L. {Chan}, P.~H. {Che}, S.~{Jaggi}, and V.~{Saligrama}, ``Non-adaptive
  probabilistic group testing with noisy measurements: Near-optimal bounds with
  efficient algorithms,'' in \emph{2011 49th Annual Allerton Conference on
  Communication, Control, and Computing (Allerton)}, Sep. 2011, pp. 1832--1839.

\bibitem{5169989}
D.~Baron, S.~Sarvotham, and R.~G. Baraniuk, ``Bayesian compressive sensing via
  belief propagation,'' \emph{IEEE Transactions on Signal Processing}, vol.~58,
  no.~1, pp. 269--280, 2010.

\bibitem{4085381}
T.~Vercauteren, A.~L. Toledo, and X.~Wang, ``Batch and sequential {B}ayesian
  estimators of the number of active terminals in an {IEEE} 802.11 network,''
  \emph{IEEE Transactions on Signal Processing}, vol.~55, no.~2, pp. 437--450,
  2007.

\bibitem{6477839}
A.~Hazmi, J.~Rinne, and M.~Valkama, ``Feasibility study of {IEEE} 802.11ah
  radio technology for {IoT} and {M2M} use cases,'' in \emph{2012 IEEE Globecom
  Workshops}, 2012, pp. 1687--1692.

\bibitem{6157065}
G.~K. {Atia} and V.~{Saligrama}, ``Boolean compressed sensing and noisy group
  testing,'' \emph{IEEE Transactions on Information Theory}, vol.~58, no.~3,
  pp. 1880--1901, March 2012.

\bibitem{10.5555/1146355}
T.~M. Cover and J.~A. Thomas, \emph{Elements of Information Theory (Wiley
  Series in Telecommunications and Signal Processing)}.\hskip 1em plus 0.5em
  minus 0.4em\relax USA: Wiley-Interscience, 2006.

\bibitem{1057459}
A.~Feinstein, ``A new basic theorem of information theory,'' \emph{Transactions
  of the IRE Professional Group on Information Theory}, vol.~4, 1954.

\bibitem{4797638}
{Jun Luo} and {Dongning Guo}, ``Neighbor discovery in wireless ad hoc networks
  based on group testing,'' in \emph{2008 46th Annual Allerton Conference on
  Communication, Control, and Computing}, 2008, pp. 791--797.

\bibitem{661103}
R.~McEliece, D.~MacKay, and J.-F. Cheng, ``Turbo decoding as an instance of
  pearl's "belief propagation" algorithm,'' \emph{IEEE Journal on Selected
  Areas in Communications}, vol.~16, no.~2, pp. 140--152, 1998.

\bibitem{5394787}
G.~Atia and V.~Saligrama, ``Noisy group testing: An information theoretic
  perspective,'' in \emph{2009 47th Annual Allerton Conference on
  Communication, Control, and Computing (Allerton)}, 2009, pp. 355--362.

\bibitem{6763117}
C.~L. {Chan}, S.~{Jaggi}, V.~{Saligrama}, and S.~{Agnihotri}, ``Non-adaptive
  group testing: Explicit bounds and novel algorithms,'' \emph{IEEE
  Transactions on Information Theory}, vol.~60, no.~5, pp. 3019--3035, 2014.

\bibitem{5707018}
D.~Sejdinovic and O.~Johnson, ``Note on noisy group testing: Asymptotic bounds
  and belief propagation reconstruction,'' in \emph{2010 48th Annual Allerton
  Conference on Communication, Control, and Computing (Allerton)}, 2010, pp.
  998--1003.

\bibitem{hara2022sparse}
Y.~Hara and K.~Kasai, ``Sparse group quantitative pcr testing by belief
  propagation,'' in \emph{2022 IEEE International Symposium on Information
  Theory (ISIT)}.\hskip 1em plus 0.5em minus 0.4em\relax IEEE, 2022, pp.
  2980--2984.

\bibitem{9654225}
V.~K. Amalladinne, A.~K. Pradhan, C.~Rush, J.-F. Chamberland, and K.~R.
  Narayanan, ``Unsourced random access with coded compressed sensing:
  Integrating amp and belief propagation,'' \emph{IEEE Transactions on
  Information Theory}, vol.~68, no.~4, pp. 2384--2409, 2022.

\end{thebibliography}

\end{document}